\newcommand{\keywords}[1]{\textbf{Keywords: } #1}
\newtheorem{definition}{Definition}
\newtheorem{proposition}{Proposition}
\newtheorem{theorem}{Theorem}
\newtheorem{lemma}{Lemma}
\newtheorem{corollary}{Corollary}
\title{Automatic Randomness Tests}
\author{Birzhan Moldagaliyev}
\affil{Department of Mathematics\\ National University of Singapore}
\date{}
\begin{document}

\maketitle

\begin{abstract}
\noindent
In this paper we define a notion of automatic randomness tests~(ART) which capture measure theoretic typicalness of infinite binary sequences within the framework of automata theory. An individual ART is found to be equivalent to a deterministic B\"{u}chi automaton recognizing $\omega$-language of (Lebesgue) measure zero. A collection of ART's induce a notion of automatic random sequence. We provide a purely combinatorial characterization of an automatic random sequence in the form of a disjunctive property for sequences. At last, we compare two kinds of automatic randomness tests presented in this paper.
\end{abstract}
\keywords{Algorithmic randomness; Automata Theory}

\section{Introduction}
The theory of algorithmic randomness \cite{Downey} tries to explain what kind of properties make an individual element of a sample space to appear random. Mostly the theory deals with infinite binary sequences. One of the first works in the field could be attributed to Borel \cite{Borel} with his notion of normal numbers. An emergence of the computability theory allowed to formalize a notion of randomness with respect to certain classes of algorithms. Loosely speaking, an object is considered random, when any kind of algorithm fails to find sufficient patterns in the structure of that object. There are many kinds of randomness definitions within the theory, however most of them arise from three paradigms of randomness:
\begin{enumerate}
	\item Unpredictability
	\item Incompressibility
	\item Measure theoretical typicalness
\end{enumerate}
Each paradigm has its own tools to explore randomness. In relation to the unpredictability paradigm different kinds of effective martingales are considered \cite{Schnorr1971}. On the other hand, the incompressibility paradigm works with a notion of complexity, the most famous example being the Kolmogorov complexity \cite{Kolmogorov}. Finally, the measure theoretical typicalness paradigm tries to effectivize a notion of a nullset in measure-theoretic sense, with the most prominent example in the face of Martin-L\"{o}f randomness tests \cite{Lof}. An interesting part is that one could choose an appropriate definition from each paradigm so that arising classes of random infinite sequences coincide. This interconnection is thought to symbolize a universality of the notion. In almost all of these studies, a full computational strength of a Turing machine is used. On the one hand, it guarantess a generality of resulting theory, because any kind of effective process can be designed as an instance of a Turing machine. However, nothing prevents us from considering weaker variants of computational machines. For example, one could consider a polynomial-time computability, which gives rise to the notion of resource-bounded measure \cite{Lutz}. One can go even further and consider finite state machines, thus hitting the bottom of a computational machines hierarchy. Most paradigms for randomness have their variants adopted for finite state machines. As for the unpredictability paradigm, there are automatic martingales \cite{Stimm} and finite state predicting machines \cite{OConnor}. As for the incompressibility paradigm, one has several variants of automatic complexity for finite strings \cite{Shallit, Hyde, Calude2,Becher,Shen}. On top of that, there are approaches \cite{Busse} to study a notion of genericity from the computability theory, in the context of automata theory. From perspective of measure typicalness, we have a notion of regular nullset defined by Staiger \cite{Null, Monadic}. Those regular nullsets correspond to Buchi recognizable $\omega$-languages of measure zero. Our aim is to replicate the construction of Martin-L\"{o}f randomness tests, where nullsets are defined as intersection of classes having some algorithmic property, within automata theoretic framework. 

\section{Background}
In this section we review concepts on which the main part of the paper going to rely on. 
\subsection{Finite State Machines}
Let $S$ be a finite set, elements of which are to be interpreted as states. Consider a binary alphabet $\Sigma=\{0,1\}$. By concatentating elements of $\Sigma$, we obtain words, e.g. $00$, $101$, etc. We denote the collection of all words $\Sigma^*$, which is a monoid under concatenation and an identity element of the empty string, $\varepsilon$. Let $\Sigma$ act on $S$ from the right using a function $f: S \times \Sigma \rightarrow S$
and we can extend $f$ to an action $\cdot$ on the full monoid $\Sigma^*$ by defining
$$
s \cdot \varepsilon = s \mbox{ and }
s \cdot a = f(s,a) \mbox{ and } s \cdot (xy) = (s \cdot x) \cdot y
$$
for all states $s \in S$ and letters $a \in S$ and words $x,y \in \Sigma^*$. We choose some state, $s_0$, to be a \emph{starting state}. Furthermore, we are going to assume that all other states are reachable from $s_0$, i.e. for all $s\in S$, there is $w\in \Sigma^*$ such that $s_0\cdot w = s$. A \emph{Finite State Machine~(FSM)} is given as a triple of data: $(S,f,s_0)$. Given two words $x,y$ such that $x=yz$ for some word $z$, $y$ is said to be a \emph{prefix} of $x$, which is denoted as $y\preceq x$. In case $x\neq y$, it is said to be a \emph{strict prefix}, denoted as $y\prec x$. A term \emph{language} refers to any subcollection of $\Sigma^*$.
\paragraph*{Connected Components of FSM}
Suppose we are given a FSM, $M=(S,f,s_0)$. We can turn $S$ into a partially ordered set as follows:
\[
x\ge y\, \Leftrightarrow \,\exists w\in \Sigma^* \text{ such that } x\cdot w = y
\]
In other words, $x\ge y$ iff $y$ is reachable from $x$, for $x,y\in S$. The given partial order is reflexive and transitive, so it is possible to define following equivalence relation $\sim$ on $S$:
\[
x\sim y \, \Leftrightarrow x\ge y\text{ and } y\ge x
\]
Above equivalence relation tells that $x,y\in S$ belong to the same connected component. By taking quotient of $S$ under $\sim$ we end up with collection of connected components $[M]$ with inherited partial order. We call $g\in [M]$ a \emph{leaf connected component} if it is minimal, i.e. there is no element of $[M]$ strictly less than $g$. We denote a collection of leaf connected components of $M$ as $(M)$. 

\paragraph*{Finite Automata}
Let us enrich the structure of finite state machine, $(S,f,s_0)$, by introducing a set of \emph{accepting states} $F\subseteq S$. In order to simplify the notation, we reserve the letter $M$ for all of machines to be introduced. A \emph{(deterministic) finite automaton}, $M$, is given by a quadruple of data: $(S,f,s_0,F)$. A word $w\in \Sigma^*$ is said to be accepted by $M$ if:
\[
s_0\cdot w\in F
\]
Given a finite automaton $M$, the language of $M$, $L(M)$, corresponds to the collection of all words accepted by $M$. A language $L\subseteq \Sigma^*$ is said to be \emph{regular} if there is a finite automaton $M$ recognizing it, i.e. $L=L(M)$.

\paragraph*{Automatic Relation}
A membership of some word $w$ in a regular language $L$ can be easily checked on a corresponding finite automaton of $L$. Although this property is rather convenient, it is quite limiting in a sense that it is unary. There is a way to extend a notion of 'regularity' or 'automaticity' on relationships of arbitrary arity. Given some input of $k$ words $(w_1,w_2,\ldots,w_k)$ it is possible to convolute them into the one block in the following manner:
\[\left[
\begin{array}{ c c c c c c}
	( & \cdots & w_1 & \cdots & \#\# & )\\
	(& \cdots & w_2 & \cdots & \cdots & )\\
	(& \cdots & \cdots & \cdots & \cdots & )\\
	(& \cdots & w_k & \cdots & \# & )
\end{array}
\right]
\]
where length of the block is equal to the length of the longest word, while empty spaces at the end of shorter words are filled with a special symbol such as $\#$. Each column of such a block can be considered as a single symbol coming from a product space $\Sigma_{\#}^k$, where $\Sigma_{\#}$ refers to $\Sigma$ with added special symbol, i.e. $\Sigma_{\#}=\{0,1,\#\}$. By analogy with finite state machines, let $\Sigma_{\#}^k$ act on some finite set $S$ forming a function $f$:
\[
f:S\times \Sigma_{\#}^k\to S
\] 
This function is then extended to the action of $(\Sigma_{\#}^k)^*$ on $S$, defining a finite state machine $M=(S,f,s_0,F)$ with some $s_0\in S$ and $F\subseteq S$. Given block $u$ of above type is said to be accepted by $M$, if the run of $u$ on $M$ finishes at an accepting state, i.e. $s_0\cdot u\in F$. A relation $R\subseteq \Sigma^k$ is said to be automatic if there is a finite automaton $M$ which recognizes elements of $R$ given in the block form as above. A function $\phi:\Sigma^k \to \Sigma^m$ is said to be automatic if its graph forms an automatic relation, i.e. $graph(\phi)=\{(x,\phi(x))\mid x\in \Sigma^k\}\subseteq \Sigma^{k+m}$ is automatic. Let us give an example of automatic relation:\\
Let $R$ be a binary relation on words, $R\subseteq (\Sigma^*)^2$, such that $(x,y)\in R$ iff $\vert x \vert \le \vert y \vert$. We can construct a following finite automaton $M=(S,f,s_0,F)$ to recognize $R$. Firstly, we set $S=\{s_0,s_1\}$ with $F = \{s_0\}$. As for a function $f$, we set:
\begin{align*}
s_0\cdot\begin{pmatrix}a\\b\end{pmatrix} = \begin{cases} s_1, & \text{ if } a=\#\\s_0,& \text{ otherwise }\end{cases} 
&&
s_1\cdot\begin{pmatrix} a\\b \end{pmatrix} = s_1, \text{ for all } a,b
\end{align*}
Automatic relations enjoy quite convenient properties from logical point of view. This fact is captured by the following theorem by Khoussainov and Nerode \cite{Kho}:
\begin{proposition}
Let $R$ be a first-order definable relation on $(\Sigma^*)^k$ from given functions $(f_1,f_2,\ldots,f_n)$ and relations $(R_1,R_2,\ldots,R_m)$. If each of these functions and relations is automatic, then $R$ is also automatic. 
\end{proposition}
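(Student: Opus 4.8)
The plan is to prove the statement by structural induction on the first-order formula $\varphi$ defining $R$, exploiting that the class of automatic relations is closed under the operations corresponding to the logical connectives and quantifiers. Before the induction can begin, I would first eliminate the function symbols: each $f_i$ is automatic by hypothesis, meaning its graph $\mathrm{graph}(f_i)$ is an automatic relation, so every occurrence of a compound term $f_i(t_1,\dots,t_\ell)$ can be flattened by introducing a fresh variable $w$ together with the conjunct $\mathrm{graph}(f_i)(t_1,\dots,t_\ell,w)$ guarded by an existential quantifier over $w$. Iterating this rewriting produces an equivalent formula in which every atomic subformula has the form $R_i(x_{j_1},\dots,x_{j_a})$ or $x_i = x_j$, with variables as the only arguments. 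Thus it suffices to treat a relational first-order formula whose atoms are drawn from the automatic relations $R_1,\dots,R_m$ and equality, where equality of words is plainly automatic, recognized by the two-state automaton that rejects upon the first column carrying distinct letters.

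The invariant maintained through the induction is that for every subformula $\psi$, together with a fixed listing $(x_1,\dots,x_p)$ of variables containing all of its free variables, the set of $p$-tuples of words satisfying $\psi$ is recognized, in convolved block form, by a deterministic finite automaton over the product alphabet $\Sigma_\#^{\,p}$. Two auxiliary closure operations make the atomic and Boolean cases routine. \emph{Cylindrification} adds a dummy track ranging over all words, realized by a DFA over $\Sigma_\#^{\,p+1}$ that ignores the new track, subject to the padding discipline noted below; \emph{coordinate permutation} relabels tracks and is handled by permuting the product alphabet. With these, any atom $R_i(x_{j_1},\dots,x_{j_a})$ over the ambient variable list is obtained from the given automaton for $R_i$ by permuting and cylindrifying. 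For the Boolean connectives I would first cylindrify the two operands to a common variable list; conjunction and disjunction then become intersection and union, realized by the standard product construction on DFAs, while negation is complementation, realized by swapping accepting and non-accepting states. A technical point worth isolating here is that complementation must be taken relative to the regular set of \emph{well-formed} convolutions --- those blocks in which, on each track, no letter of $\Sigma$ ever follows the padding symbol $\#$; this set is itself regular, so intersecting with it repairs the complement.

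The genuinely delicate step is the existential quantifier, which corresponds to projecting out one track. Given a DFA recognizing the automatic relation defined by $\psi(x_1,\dots,x_p)$, erasing (say) the last track yields a machine that guesses the existentially quantified word, hence a \emph{nondeterministic} automaton over $\Sigma_\#^{\,p-1}$; applying the subset construction returns to a DFA, since nondeterministic and deterministic finite automata recognize the same languages. The subtlety that I expect to be the main obstacle is the bookkeeping of the padding symbol. When the quantified word is strictly longer than all the surviving words, the convolution of $\psi$ carries trailing columns in which every surviving track already shows $\#$ while the projected track still shows letters; after projection these become spurious all-$\#$ columns absent from the intended normal-form encoding of the $(p-1)$-tuple. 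The fix is to $\#$-saturate the projected language: close it under appending and deleting trailing all-$\#$ columns before determinizing, so that a tuple is accepted regardless of how many padding columns trail it. Verifying that this saturation preserves regularity, and that it correctly mediates between the convolution conventions on the two sides of the projection, is the crux of the argument.

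Having established closure under $\neg$, $\wedge$, $\vee$ and $\exists$, the remaining connectives and the universal quantifier come for free: $\varphi_1 \to \varphi_2$ abbreviates $\neg\varphi_1 \vee \varphi_2$, and $\forall x\,\psi$ abbreviates $\neg\exists x\,\neg\psi$. Unwinding the induction from the atomic formulas up through the outermost connective of $\varphi$ therefore yields a deterministic finite automaton recognizing $R$ in block form, which is exactly the assertion that $R$ is automatic.
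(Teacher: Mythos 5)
The paper states this proposition as a cited result of Khoussainov and Nerode and gives no proof of its own, so there is nothing in the text to compare against; what you have written is the standard argument from the literature on automatic structures, and it is sound. You correctly isolate the two points where such proofs usually go wrong: complementation must be taken relative to the regular set of well-formed convolutions (no letter after a $\#$ on any track), and projection of a track must be followed by $\#$-saturation so that trailing all-$\#$ columns left behind by a long witness are reconciled with the normal-form encoding of the surviving tuple; both repairs preserve regularity, and the subset construction restores determinism after the projection. The function-elimination step via graphs and a fresh existentially quantified variable is also correct (and, since the $f_i$ are total, the choice between $\exists$ and $\forall$ in that rewriting is immaterial). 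The only caveat worth recording is that this closure argument is non-elementary in the quantifier depth of $\varphi$ because each $\exists$ forces a determinization, but the proposition claims only automaticity, not any bound on automaton size, so this does not affect correctness.
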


\paragraph*{Automatic Family}
In the computability theory, there is a notion of uniformly computably enumerable sets. This notion allows to look at the collection of sets from a single effective frame of reference. A somewhat parallel notion exists in automata theory known as \emph{Automatic family} \cite{Jain}.
\begin{definition}[Automatic family \cite{Jain}]
An automatic family is a collection of languages $\mathcal{U} = (U_i)_{i\in I}$ such that:
\begin{enumerate}
	\item $I$ is a regular set;
	\item $\{(x,i)\mid x\in U_i\}$ is an automatic relation.
\end{enumerate}
\end{definition}
	
\subsection{Infinite sequences}
While we refer to elements of $\Sigma^*$ as strings or words, we refer to an one-way infinite sequence on $\Sigma$ as a \emph{sequence} or $\omega$-\emph{word}. Alternatively, $\omega$-word $X$ can be a thought as a function:
\[X:\mathbb{N}\to \{0,1\}
\]
Given a sequence $X$, its elements can be indexed as $X=X_1X_2X_3\ldots$. Given a word $w$ and a sequence $X$, one can concatenate them, forming a sequence $Y=wX$. Herein, $w$ is called a \emph{prefix} of $Y$ and denoted as $w\prec Y$. If the length of $w$ is $i$, $\vert w \vert = i$, then $w$ is denoted as $Y[i]$. A collection of all prefixes of $Y$ is denoted as $Pref(Y)$. On the other hand $X$ is called \emph{suffix} of $Y$ and denoted as $X=Y[i:]$. A collection of all sequences, $\{0,1\}^{\mathbb{N}}$, is also called a \emph{Cantor Space}. A subcollection of the Cantor space is usually referred to as a \emph{class} and denoted with italicized capital letters. Given a language $L$ and a class $\mathcal{C}$, we can take their product:
\[L\cdot \mathcal{C} = \{wX\mid w\in L,\, X\in\mathcal{C}\}
\]

\paragraph*{B\"{u}chi Automata}
Automata which process sequences are usually called $\omega$-automata. The simplest of such automata are \emph{B\"{u}chi automata}. As we are going to work with deterministic B\"{u}chi automata, we limit ourselves revewing such automata only. However, reader should note that the term B\"{u}chi automata, generally, refers to nondeterministic B\"{u}chi automata. A deterministic B\"{u}chi automaton $M$ is given by a quadruple $M=(S,f,s_0,F)$ just as finite automata. However, an acceptance condition needs to be revised. Given a sequence $X$, it induces an infinite run of states, $S(X)$, in the following manner:
\begin{align*}
S(X)_1 = s_0 && S(X)_n \cdot X_n = S(X)_{n+1}, \text{ for all } n
\end{align*}
Let us denote the collection of elements of $S$ which appear infinitely often in $S(X)$ as $I(X)$. A sequence $X$ is accepted by $M$ iff $s\in F$ for some $s\in I(X)$, i.e. some accepting state appears infinitely often in $S(X)$. A collection all sequences accepted by $M$ forms the class $L(M)$. A class $\mathcal{C}$ is said to be recognized by determinstic B\"{u}chi automata if there $M$ as above such that $L(M)=\mathcal{C}$. A B\"{u}chi automaton $M$ is said to be of measure $m$ if $\mu(L(M))=m$, where $\mu$ is the Lebesgue measure on the Cantor Space. 

\paragraph*{Muller Automata}
To review a structure of deterministic Muller automata, we start with a finite state machine $(S,f,s_0)$. We enrich its structure by a subset of powerset of $S$, i.e. accepting collection $F\subseteq \mathcal{P}(S)$. A Muller automata $M$ is given by a quadruple: $M=(S,f,s_0,F)$. A sequence $X$ is said to be accepted by $M$ if 
\[
I(X)\in F
\]
Again, a collection of all sequences accepted by $M$ forms the language of $M$, $L(M)$. A class $\mathcal{C}$ is said ot be recognized by determinsitic Muller automata if there is $M$ as above such that $L(M)=\mathcal{C}$. A Muller automaton $M$ is said to be of measure $m$ if $\mu(L(M))=m$, where $\mu$ is the Lebesgue measure on the Cantor Space.

\paragraph*{Equivalence of B\"{u}chi and Muller automata}
It turns out that nondeterministic B\"{u}chi automata are equivalent to deterministic Muller automata in expressive power. This result is known as McNaughton's theorem \cite{Mc}. 

\subsection{Measure on the Cantor space}
The Cantor space can be given a Lebesgue measure, $\mu$, full details of which can be found in the book by Oxtoby \cite{Oxtoby}. For the sake of completeness, we are going to review essentials needed for upcoming discussions. A given measure $\mu$ can be thought of as a product Bernoulli measure induced by the equiprobable measure $\mu_0$ on $\{0,1\}$. Given a word $w$, let 
\[
[w] = \{w\}\cdot \{0,1\}^{\mathbb{N}}
\]
In a more general way, given a language $L$:
\[
[L]=L\cdot \{0,1\}^{\mathbb{N}}
\]
be the class of all sequences extending elements of $L$. Former classes form basic open classes with $\mu[w]=2^{-\vert w \vert}$. Given a language $W$, it is said to be an $\alpha$-cover of a class $\mathcal{C}$ if:
\[
\mathcal{C}\subseteq \bigcup_{w\in W}[w]\quad \text{and} \quad\sum_{w\in W}\mu[w] \le \alpha
\]
\begin{definition}
A class $\mathcal{C}$ is said to be of measure $0$, if there is $2^{-n}$-cover of $\mathcal{C}$, for all $n\ge 0$. A class $\mathcal{C}$ is said to be of measure $1$, if $\overline{\mathcal{C}}$ is of measure $0$.
\end{definition}

\paragraph*{Disjunctive sequences}
We are going to review a definition of \emph{disjunctive property} \cite{Calude} for sequences, which is going to be useful later on. 
\begin{definition}[Disjunctive sequence \cite{Calude}]
An infinite sequence $X$ is said to be disjunctive if any word $w\in \Sigma^*$ appears in $X$ as a subword. 
\end{definition}
\noindent
Let $\mathcal{D}$ denote the collection of all disjunctive sequences. It has been shown \cite{Calude} that $\mu(\mathcal{D})=1$. 

\section{Definitions}
\subsection{Main Definitions}
In this section we are going to define a notion of randomness tests in the context of automata theory. In doing so, we draw inspiration from the original definition of effective randomness tests by Martin-L\"{o}f \cite{Lof}. The main contribution of the mentioned paper is a process of defining measure zero classes in an algorithmic fashion. More precisely, one considers a uniformly recursively enumerable collection of languages,  $\mathcal{V}=(V_i)_{i\in \mathbb{N}}$, such that the measure of an individual class satisfies $\mu[V_i]\le 2^{-i}$. An effective nullset corresponding to the test $\mathcal{V}$ is given by $\bigcap_{i\in \mathbb{N}}[V_i]$. This shows that a definition of randomness tests requires only two concepts:
\begin{itemize}
	\item Uniform collection of languages.
	\item Condition on their measure.
\end{itemize}
In order to translate a concept of randomness tests into the domain of automata theory, we adopt a following strategy. As for a uniform collection of languages, we adopt a notion of an automatic family. As for a condition on their measure, we let measures of these sets be arbitrarily small. These ideas are captured in the following definitions. 
\begin{definition}[Martin-L\"{o}f automatic randomness tests]
Let $\mathcal{U}=(U_i)_{i\in I}$ be an automaic family. We say that $\mathcal{U}$ forms a Martin-L\"{o}f automatic randomness test~(MART) if:
\[
\mu[U_i]\le 2^{-\vert i \vert} \text{ for all }i\in I
\]
\end{definition}
\noindent
Above definition is a direct analog of Martin-L\"{o}f randomness tests, because any individual class has a condition on its measure. This is an example of a local condition on the measures. Instead, one could also have a global condition on the measures. This way, we obtain an analog of weak 2-randomness \cite{Downey}.
\begin{definition}[Automatic randomness tests]
Let $\mathcal{U}=(U_i)_{i\in I}$ be an automatic family. We say that $\mathcal{U}$ forms an Automatic randomness test~(ART) if 
\[
\liminf_{i\in I}\mu[U_i]=0
\]
\end{definition}
\noindent
Corresponding nullsets are defined in the manner of Martin-L\"{o}f randomness tests.
\begin{definition}[Covering]
Given an (M)ART $\mathcal{U}=(U_i)_{i\in I}$, let 
\[F(\mathcal{U})=\bigcap_{i\in I}[U_i]
\]
be its covering region. An infinite sequence $X$ is said to be covered by $\mathcal{U}$ if it belongs to the covering region of $\mathcal{U}$, i.e. $X\in F(\mathcal{U})$. A pair of (M)ART's $(\mathcal{U},\mathcal{V})$ is said to be equivalent if $F(\mathcal{U})=F(\mathcal{V})$. 
\end{definition}
\noindent
Finally, we define a notion of a random sequence in parallel with the original definition by Martin-L\"{o}f:
\begin{definition}[Random Sequence]
An infinite sequence $X$ is said to be \emph{Martin-L\"{o}f automatic random~(MAR)} if $X$ is not covered by any MART. Similarly, an infinite sequence $X$ is said to be \emph{automatic random~(AR)} if $X$ is not covered by any ART.
\end{definition}
\noindent
A theory of ART is more general and richer compared to that of MART. For this reason, we are going to focus our attention on ART's. Closer to the end of the paper, we are going to show that ART's are not equivalent to MART's. 

\subsection{Examples}
Let us provide some examples of automatic randomness tests. As groups are understood by their actions, ART's are understood by the type of sequences they cover. Let us construct an automatic randomness test covering some ultimately periodic infinite sequence $X$, i.e. $X=uv^{\omega}$ for some $u,v\in \Sigma^*$. To build covering ART, $\mathcal{U}=(U_i)_{i\in I}$, let us have:
\[
I = uv^*, \quad U_i = \{i\}
\]
In some sense, ultimately periodic sequences are too rigid, so it does not take a significant effort to come up with an ART covering it. To consider more complex examples, let us take a following class of infinite sequences:
\[\mathcal{C} = \{X\mid X_{2i}=0 \text{ for all }i\in \mathbb{N}\}
\]
In other words class $\mathcal{C}$ is a collection of sequences having $0$ at even numbered positions. In terms of the computability theory, each member, $X$, of $\mathcal{C}$ has a form $X=A\oplus 0^{\omega}$. To construct ART $\mathcal{U}=(U_i)_{i\in I}$ covering the class $\mathcal{C}$, let us take:
\[I=(00)^*,\quad U_i=(\Sigma 0)^{\frac{\vert i \vert}{2}}
\]
It is clear that $(U_i)_{i\in I}$ is a valid automatic family. Since  $\mu[U_i]=2^{-\frac{\vert i \vert}{2}}$, we have that $\lim_{i}\mu[U_i]=0$. Moreover $X\in \mathcal{C}$ if and only if $X\in [U_i]$ for all $i\in I$. Hence, $\mathcal{C}$ is exactly a covering region of $\mathcal{U}$. With reference to computability theory, Turing degree of $X=A\oplus 0^{\omega}$ is that of $A$. As $A$ can be chosen to be a sequence of any Turing degree, we have a following fact:

\begin{proposition}[Covering sequences of any complexity]
There is ART covering sequences of arbitrarily high Turing degree
\end{proposition}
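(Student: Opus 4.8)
The plan is to observe that the proposition is essentially already established by the worked example immediately preceding it, and so my strategy is to reuse the very ART $\mathcal{U}=(U_i)_{i\in I}$ with $I=(00)^*$ and $U_i=(\Sigma 0)^{\vert i\vert/2}$ constructed there. That example shows $\mathcal{U}$ is a genuine ART (it is an automatic family with $\mu[U_i]=2^{-\vert i\vert/2}\to 0$) and that its covering region is exactly $F(\mathcal{U})=\mathcal{C}=\{X\mid X_{2i}=0 \text{ for all } i\in\mathbb{N}\}$. Hence a single fixed ART suffices, and the whole task reduces to showing that this one class $\mathcal{C}$ already contains sequences of arbitrarily high Turing degree.

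The central step is to analyze the Turing degree of the members of $\mathcal{C}$. Every $X\in\mathcal{C}$ has the form $X=A\oplus 0^{\omega}$, where the join places the bits of $A$ at the odd positions and $0$'s at the even positions. I would argue that the map $A\mapsto A\oplus 0^{\omega}$ is a degree-preserving bijection from the Cantor space onto $\mathcal{C}$: from $X$ one recovers $A$ by reading off the odd-indexed bits, and from $A$ one recovers $X$ by interleaving with zeros, so each is computable from the other and $\deg(X)=\deg(A)$. This is the only computation-theoretic content in the argument, and it is routine.

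With this in hand I would invoke the standard fact from computability theory that there is no upper bound on the Turing degrees of reals, so for any prescribed level of complexity there exists a sequence $A$ whose degree exceeds it. Taking the corresponding $X=A\oplus 0^{\omega}$ then yields a member of $\mathcal{C}=F(\mathcal{U})$ of that same arbitrarily high degree, which is by definition covered by the ART $\mathcal{U}$. Since the degree of $A$ may be chosen arbitrarily high, this proves the claim.

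Candidly, I do not expect a genuine obstacle here: the difficulty is conceptual rather than technical, namely the realization that one need not build a new test for each degree but can exhibit a single ART whose covering region is rich enough to meet every degree. The only point demanding any care is confirming the degree-preservation of the join $A\mapsto A\oplus 0^{\omega}$, and that the even-position-zero structure is precisely what makes $\mathcal{C}$ coverable by a finite-state test while still admitting arbitrarily complex reals on the odd positions.
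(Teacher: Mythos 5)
Your proposal is correct and follows essentially the same route as the paper: the paper likewise reuses the example ART with $I=(00)^*$, $U_i=(\Sigma 0)^{\vert i\vert/2}$, notes that every $X\in\mathcal{C}$ has the form $A\oplus 0^{\omega}$ with $\deg(X)=\deg(A)$, and concludes by choosing $A$ of arbitrarily high degree. Your added care in verifying that $A\mapsto A\oplus 0^{\omega}$ is a degree-preserving bijection onto $F(\mathcal{U})$ only makes explicit what the paper asserts in one line.
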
\noindent
This result might come as a quite unexpected, as ART covered sequences are thought to be of low computational complexity. 

\section{Properties}
In this section we study properties of an ART. Firstly, we would like to focus on immediate properties. 
\subsection{Immediate Properties}
We are going to show that an ART can be assumed to have a simple form.
\begin{theorem}[Assumptions on ART]
Let $\mathcal{U}$ be an ART. Then there is an equivalent ART $\mathcal{V}=(V_j)_{j\in J}$ satisfying following properties:
\begin{itemize}
	\item $J = 0^*$;
	\item $(V_j)_{j\in J}$ is a decreasing sequence, i.e. $V_i\supseteq V_j$ for $i\prec j$;
	\item A length of an word contained in $V_i$ is at least $\vert i \vert+1$, i.e. $x\in V_i\Rightarrow \vert x \vert > \vert i \vert$.
\end{itemize}
\end{theorem}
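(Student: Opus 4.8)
The plan is to build the required $\mathcal{V}$ through a single uniform definition over $J=0^*$ that folds all three demands into one first-order condition. The guiding observation is that the covering region depends only on the total intersection $\bigcap_{i\in I}[U_i]$, so I am free to replace the individual classes by the \emph{cumulative} intersections $\bigcap_{|i|\le n}[U_i]$, indexed by $n$. Writing $n$ in unary as $0^n$ supplies the index set $J=0^*$ for free, and cumulative intersections are automatically nested as $n$ grows, which will deliver the decreasing requirement. The length requirement I fold in by a simple guard $|w|>n$ on the words admitted at level $0^n$.

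Concretely, I would set
\[
V_{0^n}=\bigl\{\,w\in\Sigma^*\ :\ |w|>n\ \text{ and }\ \forall i\in I\,\bigl(|i|\le n\rightarrow \exists v\preceq w\ (v\in U_i)\bigr)\bigr\}.
\]
Here $\exists v\preceq w\,(v\in U_i)$ says exactly that $[w]\subseteq[U_i]$, so $w$ is admitted at level $0^n$ iff $[w]$ lies inside every $[U_i]$ with $|i|\le n$. From this one reads off the three properties directly: the guard $|w|>n=|0^n|$ gives the length bound; if $w\in V_{0^{n+1}}$ then $|w|>n+1>n$ and $[w]$ lies inside a larger intersection, hence $w\in V_{0^n}$, giving $V_{0^{n+1}}\subseteq V_{0^n}$ as languages; and since each $\bigcap_{|i|\le n}[U_i]$ is a finite intersection of open classes, hence open, one checks $[V_{0^n}]=\bigcap_{i\in I,\,|i|\le n}[U_i]$, whence $F(\mathcal{V})=\bigcap_n[V_{0^n}]=\bigcap_{i\in I}[U_i]=F(\mathcal{U})$. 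The test condition also transfers, since $\mu[V_{0^n}]\le\min_{|i|\le n}\mu[U_i]$, and $\liminf_{i\in I}\mu[U_i]=0$ forces this minimum to tend to $0$, so in fact $\lim_n\mu[V_{0^n}]=0$.

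The one step that needs genuine care, and the main obstacle, is showing that $\{(w,0^n):w\in V_{0^n}\}$ is an automatic relation, since $V_{0^n}$ is defined through an intersection whose \emph{number} of terms grows with $n$. The resolution is to notice that this growing conjunction is nothing but the bounded universal quantifier $\forall i\,(|i|\le n\rightarrow\cdots)$ appearing in the displayed formula. All of its ingredients are automatic: $I$ is regular, the length comparison $|i|\le|0^n|$ is automatic, the prefix relation $v\preceq w$ is automatic, and $\{(v,i):v\in U_i\}$ is automatic because $\mathcal{U}$ is an automatic family. The defining condition for $V_{0^n}$ is a first-order combination of these relations, so Proposition 1 immediately yields that the relation is automatic and that $\mathcal{V}=(V_{0^n})_{n}$ is an automatic family. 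A secondary subtlety worth flagging is that I deliberately admit \emph{all} sufficiently long words $w$ with $[w]$ inside the intersection, rather than only the minimal generators; this redundancy is harmless for automaticity but is exactly what makes the language inclusions $V_{0^{n+1}}\subseteq V_{0^n}$ hold on the nose while simultaneously enforcing $|w|>|i|$.
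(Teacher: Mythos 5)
Your proposal is correct, and for the crucial third condition it takes a genuinely different and arguably cleaner route than the paper. The first half coincides with the paper's intermediate step: the paper also defines a cumulative family $W_j=\{x\mid \forall i\,[\vert i\vert\le\vert j\vert\Rightarrow\exists y\in U_i(y\prec x)]\}$ over $J=0^*$ and gets automaticity, monotonicity, equivalence of covering regions, and the measure bound exactly as you do. The divergence is in how the length requirement $x\in V_j\Rightarrow\vert x\vert>\vert j\vert$ is obtained. The paper proves it \emph{a posteriori}: it considers the automatic function $\phi(j)=\min_{ll}(W_j)$, applies the pumping lemma to show that no element of $W_j$ can be shorter than $\vert j\vert-c$ for the pumping constant $c$ (else $\phi$ would take the same short value on arbitrarily long indices, contradicting $\lim_j\mu[W_j]=0$), and then re-indexes via $V_j=W_{j0^{c+1}}$. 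You instead impose the guard $\vert w\vert>\vert j\vert$ directly inside the first-order definition; this works because the cumulative condition is closed under extending $w$, so discarding the short words changes $V_{0^n}$ as a language but not the class $[V_{0^n}]$, and the guard itself is an automatic relation so Proposition 1 still applies. Your route avoids both the pumping argument and the index shift; the paper's route additionally yields the quantitative fact that members of $W_j$ have length at least $\vert j\vert-c$, which is not needed for the statement. One small informality in your write-up: the identity $[V_{0^n}]=\bigcap_{\vert i\vert\le n}[U_i]$ is not really a consequence of openness of the finite intersection, but of the fact that only finitely many $i\in I$ satisfy $\vert i\vert\le n$, so the finitely many witnessing prefixes of a covered $X$ can be absorbed into a single common prefix of length greater than $n$; this is easily repaired and does not affect correctness.
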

\begin{proof}
Firstly, we construct an automatic family $\mathcal{W} = (W_j)_{j\in J}$ as an intermediate step. Let $J=0^*$ and we construct each member of the automatic family as follows:
\[W_j = \{x\mid \forall i[\vert i \vert \le \vert j \vert\Rightarrow \exists y\in U_i(y\prec x)] \}
\]
Due to the first-order definability property of automatic structures, $(W_j)_{j\in J}$ is a proper automatic family. From the definition it is clear that $(W_j)_{j\in J}$ is a decreasing sequence of languages. As for measure-theoretic properties, we have $\mu[W_j]\le \mu[U_i]$ for all $i$ such that $\vert i \vert \le \vert j \vert$. This shows that $\lim_{j\in J} \mu[W_j] = 0$, which makes $\mathcal{W}$ a proper ART. It is clear from construction that $\mathcal{U}$ and $\mathcal{W}$ are equivalent. So far, we have satisfied all of the desired conditions, except for the condition on a length of words. To address this condition, let us consider a following function, $\phi:0^* \to \Sigma^*$:
\[
\phi(j) = \min_{ll}(W_j) = \{x\mid x\in W_j,\, \forall y\in W_j(x\le_{ll}y)\}
\]
where $ll$ refers to \emph{length-lexicographic} linear order(sometimes called \emph{shortlex}) defined as follows:
\[
x\le_{ll} y \Leftrightarrow \vert x \vert < \vert y \vert \text{ or } \vert x \vert = \vert y \vert,\, x\le_{lex}y
\]
where $lex$ refers to usual lexicographic order. Both $lex$ and $ll$ are automatic relations. Hence, the function $\phi$ is automatic, due to the first-order definability property. Being an automatic function, $\phi$ has some pumping constant $c$. We are going to argue that any element of $W_j$ is no shorter than $\vert j \vert - c$ for any $j\in 0^*$. Otherwise, we have a pair $(x,j)$ such that $\phi(j)=x$ and $\vert j \vert > \vert x \vert + c$. Applying the pumping lemma for this instance, we assert an existence of arbitraily long $i$ such that $\phi(i)=x$. However, it contradicts the fact that $\lim_{j\in J}[W_j] = 0$. Hence, following condition on a length of words holds:
\[
x\in W_j \Rightarrow \vert x \vert \ge \vert j \vert - c
\]
Finally, we get rid of the constant $c$ in order to construct desired ART:
\[
V_j = W_{j0^{c+1}}
\]
Observe that the satisfaction of all other desired conditions remains intact by this shift in indexing. This completes our construction. As a final remark, let us observe that the conclusion of the theorem holds if we replace ART with MART, for the resulting randomness test preserves MART property.
\end{proof}

\paragraph*{Universal randomness tests}In the original theory of Martin-L\"{o}f radnomness(MLR) there is a notion of a \emph{universal ML test}. This test is expected to subsume all other Martin-L\"{o}f tests, so that testing a sequence against an universal ML test reveals its randomness properties. In our notations, universal ART Test $\mathcal{V}$ is expected to satisfy the following property: for any ART $\mathcal{U}$, we have that  $F(\mathcal{U})\subseteq F(\mathcal{V})$. We are going to show that there is no such universal test in our case. Firstly we verify some facts.

\begin{lemma}
Suppose we are given a finite automaton $M=(S,f,s_0,F)$. Assume that for every state, $q\in S$, there is a word, $u$, such that $q\cdot u\in F$. Then a measure of all sequences which never visit an accepting state when run on $M$ is zero.
\end{lemma}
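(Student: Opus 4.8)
The plan is to exploit the finiteness of $S$ together with the reachability hypothesis to obtain a uniform rate at which a random run is driven into $F$, and then convert that rate into a geometrically shrinking family of covers of the class in question. First I would extract a uniform bound. Let $N$ denote the number of states. Since every state $q$ can reach $F$, a shortest witnessing path visits distinct states, so there is a word $u_q$ with $|u_q|\le N$ whose run from $q$ passes through an accepting state. Put $m=N$ and $\delta=1-2^{-m}$. The key elementary observation is then: for \emph{any} state $q$, feeding $q$ a uniformly random word of length $m$ drives the run into $F$ with probability at least $2^{-m}$, because the block may begin with $u_q$. Equivalently, the probability that the run starting at $q$ avoids $F$ throughout $m$ steps is at most $\delta<1$, uniformly in $q$ (and when $q\notin F$ this uses $1\le|u_q|\le m$, so the visit to $F$ occurs among the newly read states).

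Next I would cut an input sequence into consecutive blocks of length $m$ and build the covers. Let $\mathcal{N}$ be the class of sequences whose run $S(X)$ never enters $F$, and for each $k$ define the language
\[
W_k=\{\,w\in\Sigma^{km}\mid \text{the run of }w\text{ from }s_0\text{ visits no state of }F\,\}.
\]
Every $X\in\mathcal{N}$ has its length-$km$ prefix in $W_k$, so $\mathcal{N}\subseteq[W_k]$, and since distinct words of the same length give disjoint cylinders, $\sum_{w\in W_k}\mu[w]=|W_k|\,2^{-km}$ is exactly the measure of the sequences avoiding $F$ during their first $km$ steps. Conditioning on the state occupied at each block boundary — which, on the avoidance event, is always some non-accepting state — and using that $\mu$ renders disjoint coordinate blocks independent, the per-block bound of the previous step multiplies, yielding $\sum_{w\in W_k}\mu[w]\le\delta^{k}$.

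Finally I would conclude: since $\delta<1$ we have $\delta^k\to0$, so given any $n$ I can choose $k$ with $\delta^k\le 2^{-n}$, whereupon $W_k$ is a $2^{-n}$-cover of $\mathcal{N}$; as this succeeds for every $n$, the class $\mathcal{N}$ has measure $0$ in the sense of the paper's definition. I expect the principal obstacle to be the factorization step: one must justify carefully that conditioning on having avoided $F$ through the first $j-1$ blocks leaves the run at a non-accepting state from which the next block again avoids $F$ with probability at most $\delta$ \emph{independently of the past} — the memorylessness of the induced Markov chain — and one must keep the block-boundary indexing honest, together with the degenerate case $s_0\in F$, in which $\mathcal{N}=\emptyset$ and the claim is immediate.
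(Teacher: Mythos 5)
Your proof is correct, but it takes a genuinely different route from the paper's. The paper constructs a single word $w=w_0w_1\cdots w_n$ such that reading $w$ from \emph{any} state forces a visit to $F$; it then observes that a sequence whose run never meets $F$ cannot contain $w$ as a subword, hence is non-disjunctive, and closes by citing the previously reviewed fact that the non-disjunctive sequences form a null class ($\mu(\mathcal{D})=1$). You instead run a direct absorption argument on the induced Markov chain: a uniform per-state escape probability of at least $2^{-m}$ over blocks of length $m=N$, multiplied across blocks by conditioning at block boundaries, gives the explicit bound $(1-2^{-m})^{k}$ on the measure of runs avoiding $F$ for $km$ steps, from which the $2^{-n}$-covers demanded by the paper's definition of measure zero are read off directly. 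Your version is self-contained and quantitative --- it does not invoke $\mu(\mathcal{D})=1$ and it yields an exponential decay rate --- whereas the paper's version is shorter and deliberately routes through disjunctivity, since the synchronizing word $w$ and the disjunctive-sequence machinery are reused in Lemma 2 and Theorems 5--7. The step you flag as delicate, the factorization across blocks, is indeed the crux, but it is sound: under the product measure the bits of block $j$ are independent of the first $(j-1)m$ bits, and conditioning on any particular $F$-avoiding prefix pins down the boundary state (necessarily outside $F$), to which the uniform per-block bound then applies.
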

\begin{proof}
Let $\{s_0,s_1,\ldots,s_n\}$ be states of $M$. We construct a word $w$, processing which from any state results in a visit to some accepting state. The desired word $w=w_0w_1\ldots w_n$ is constructed inductively. Choose $w_0$ such as $s_0\cdot w_0\in F$. For any $i>0$, choose $w_i$ such that $(s_iw_0w_1\ldots w_{i-1})\cdot w_i\in F$. In this way final word $w$ satisfies given requirements. Let us denote the class of sequences which never visits an accepting state as $\mathcal{C}$. Given any $X\in \mathcal{C}$, it is clear that $X$ does not contain $w$ as a subword. Thus, $X$ is not disjunctive. This implies $\mathcal{C}$ is a subcollection of nondisjunctive sequences. Since the latter has measure zero, the former should also have measure zero. 
\end{proof}

\begin{theorem}[No universal test for ART]
There is no universal ART. In other words, there is no ART $\mathcal{V}$ such that for any ART $\mathcal{U}$ we have $F(\mathcal{U})\subseteq F(\mathcal{V})$
\end{theorem}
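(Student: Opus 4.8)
The plan is to argue by contradiction, using the fact — already visible in the first example of Section~4.2 — that \emph{every} ultimately periodic sequence is ART-coverable. Indeed, for $Y = uv^{\omega}$ the family $I = uv^*$, $U_i = \{i\}$ is an automatic family with $\liminf_{i} \mu[U_i] = 0$ and covering region exactly $\{Y\}$. Hence, if a universal ART $\mathcal{V} = (V_i)_{i\in I}$ existed, universality would force every ultimately periodic sequence into $F(\mathcal{V})$. So my whole task reduces to exhibiting a single ultimately periodic sequence lying \emph{outside} $F(\mathcal{V})$.

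To find one I would descend into a single member of the family. Since $\liminf_{i} \mu[V_i] = 0$, I fix an index $i$ with $\mu[V_i] < 1$. The language $V_i$ is regular, being the section at the frozen index $i$ of the automatic relation $\{(x,j)\mid x\in V_j\}$; let $A_i$ be a finite automaton recognizing it, with all states reachable from $s_0$. I then transform $A_i$ into $B_i$ by making every accepting state absorbing (a sink). With this modification, the run of a sequence $X$ on $B_i$ visits an accepting state if and only if some prefix of $X$ lies in $V_i$, that is, if and only if $X \in [V_i]$.

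Now I would apply Lemma~1 in contrapositive form. The class of sequences whose $B_i$-run never meets an accepting state is precisely the complement of $[V_i]$, a class of measure $1 - \mu[V_i] > 0$. As this measure is nonzero, the hypothesis of Lemma~1 must fail for $B_i$: some state $q$ cannot reach any accepting state. Choosing $w$ with $s_0 \cdot w = q$, the absorbing nature of the accepting states guarantees that arriving at the non-accepting state $q$ certifies no prefix through time $\vert w\vert$ lay in $V_i$, while $q$'s inability to reach an accepting state certifies no later prefix does either; hence $[w] \cap [V_i] = \emptyset$. Taking $Y = w0^{\omega}$ gives an ultimately periodic sequence with $Y \in [w]$, so $Y \notin [V_i] \supseteq F(\mathcal{V})$, while universality of $\mathcal{V}$ demands $Y \in F(\mathcal{V})$ — the desired contradiction.

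I expect the main obstacle to be the careful bookkeeping in the sink construction: one must verify that reaching the dead state $q$ excludes acceptance both before and after time $\vert w\vert$, so that the cylinder $[w]$ is genuinely disjoint from $[V_i]$, and that the dead state furnished by Lemma~1 is actually reachable (which is why I insist on $B_i$ having only reachable states). The conceptual crux is the regularity of the frozen slice $V_i$ together with the reduction of the measure condition $\mu[V_i] < 1$ to Lemma~1; once a positive-measure escape set is located, the passage to an ultimately periodic witness and the collision with universality are routine.
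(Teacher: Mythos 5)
Your proof is correct and follows essentially the same route as the paper's: locate an index with $\mu[V_i]<1$, invoke Lemma~1 in contrapositive to produce a word $w$ with $[w]\cap[V_i]=\emptyset$, and then cover the ultimately periodic sequence $w0^{\omega}$ by a trivial ART to contradict universality. The one genuine addition is your sink construction on the accepting states, which guarantees that reaching the dead state $q$ also rules out a prefix of $w$ itself lying in $V_i$ --- a point the paper's own argument passes over silently --- so this extra bookkeeping is a worthwhile tightening rather than a detour.
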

\begin{proof}
Let us assume contrary, i.e. there is an universal randomness test $\mathcal{V}=(V_j)_{j\in J}$. It implies that for an arbitrary ART $\mathcal{U}=(U_i)_{i\in I}$we have:
\[
F(\mathcal{U})\subseteq F(\mathcal{V})
\]
Since $\liminf_j \mu[V_j]=0$, there is an index $j\in J$ such that $\mu[V_j]<1$. We wish to show that the class $[V_j]$ is not dense, i.e. there is some string $w$ which cannot be extended to an element of $[V_j]$. If this is a case, one can easily construct an ART $\mathcal{U}=(U_i)_{i\in I}$ so as to achieve a contradiction:
\[
U_i=\{wi\}
\]
where $I=0^*$. Clearly, an infinite sequence $w0^{\omega}\in F(\mathcal{U})$, yet $w0^{\omega}\not\in F(\mathcal{V})$ due to the fact that $w0^\omega\not\in [V_j]$. To complete the proof, we are left to show that $[V_j]$ is not a dense class. Let us observe that being a member of automatic family, $V_j$ is a regular language on its own, with some underlying finite automaton $M$. Since $[V_j]$ is an extension of all strings in $V_j$, it can be viewed a collection of all infinite sequences visiting some accepting state of $M$. To show that $[V_j]$ is not dense, it suffices to show an existence of some state, $q$ in $M$ such that for any word $w$, we have that $q\cdot w\not\in F$. If it is not the case, then according to Lemma 1, we have $\mu[V_j]=1$. As we are given that $\mu[V_j]<1$, there must be such a state $q$. Let $w$ be a word such that $s_0 \cdot w=q$. Then $w$ exhibits the fact that $[V_j]$ is not dense. Together with earlier argument, this completes the proof.  
\end{proof}

\subsection{Machine version of ART}
One of the common practices in automata theory is a search for machine characterization of a particular phenomenon. In our case, we are interested in a covering of an infinite sequence $X$ by some ART $\mathcal{U}$. A natural question to ask: if this can be checked on some machine model. Since $X$ is an infinite sequence, we need to deal with $\omega$-automata. It turns out that for given ART $\mathcal{U}$, its covering region $F(\mathcal{U})$ can be described as a language corresponding to some deterministic B\"{u}chi automata of measure zero. Moreover, converse statement also holds. 

\begin{theorem}[Characterization of a covering region]
Let $\mathcal{U}$ be an ART. Then $F(\mathcal{U})$ is recognized by a deterministic B\"{u}chi automaton of measure zero.
\end{theorem}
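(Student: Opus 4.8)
The plan is to first invoke the earlier structural result and then build the automaton by hand. First I would apply the theorem on assumptions on an ART to replace $\mathcal{U}$ by an equivalent ART, which I continue to call $\mathcal{U}=(U_n)_{n\ge 0}$ indexed by $0^n$, so that the languages are nested, $U_0\supseteq U_1\supseteq\cdots$, and every word of $U_n$ has length strictly greater than $n$. Equivalence preserves $F(\mathcal{U})$, so nothing is lost. The measure claim is then immediate: the basic classes $[U_n]$ are decreasing, hence by continuity of $\mu$ from above $\mu(F(\mathcal{U}))=\lim_n\mu[U_n]$, and since this is a decreasing sequence of measures with $\liminf$ equal to $0$, the limit is $0$. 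Thus the only real content is the B\"{u}chi recognizability.

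Next I would reformulate membership in $F(\mathcal{U})$ combinatorially. Because the family is nested, $X\in F(\mathcal{U})$ iff $X\in[U_n]$ for arbitrarily large $n$, i.e. iff for infinitely many $n$ some prefix of $X$ lies in $U_n$. I would then fix a deterministic finite automaton $A=(Q,\delta,q_0,F_A)$ over the convolution alphabet $\Sigma_{\#}^2$ recognizing the automatic relation $\{(x,0^n):x\in U_n\}$. The convolution of a pair $(x,0^n)$ with $\vert x\vert>n$ consists of $n$ columns $\binom{x_k}{0}$ followed by columns $\binom{x_k}{\#}$, so while reading $X$ I can speak of two transition modes on $A$: a $0$-mode step $\alpha_b(q)=\delta(q,\binom{b}{0})$ and a $\#$-mode step $\beta_b(q)=\delta(q,\binom{b}{\#})$. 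Running $A$ in $0$-mode on $X$ produces states $r_k$, and $X[m]\in U_n$ (for $m>n$) exactly when the $\#$-mode run started from $r_n$ on $X_{n+1}\cdots X_m$ lands in $F_A$.

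The heart of the argument is a deterministic B\"{u}chi automaton $B$ carrying out a single, self-restarting search. Its state records the current $0$-mode state $r$ together with the state $p$ of one pending $\#$-mode search. On reading $b$ it updates $r$ by $\alpha_b$ and $p$ by $\beta_b$; whenever the search reaches $F_A$ it enters a flagged accepting state and immediately relaunches the search from the current $0$-mode state $r$. I would then check both directions. If $B$ visits accepting states infinitely often, the successive launch levels strictly increase, hence are unbounded, and each success witnesses $X\in[U_n]$ for one of those levels, so $X\in F(\mathcal{U})$ by nesting. Conversely, if $X\in F(\mathcal{U})$ then $X\in[U_n]$ at the level $n$ of every launch; here the length condition $\vert x\vert>n$ is exactly what guarantees that the witnessing prefix extends strictly past the launch position, so the $\#$-mode search, which only begins reading at position $n+1$, is actually able to detect it. Every search therefore succeeds, yielding infinitely many relaunches, so $B$ accepts.

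The main obstacle I anticipate is precisely this index bookkeeping: justifying that a single deterministic search, rather than infinitely many parallel ones, suffices. The nesting of the $U_n$ lets me replace ``every level is satisfied'' by ``arbitrarily high levels are satisfied,'' and the length normalization ensures that relaunching from the current position never leaps past a genuine witness; getting these two facts to line up, together with the correct handling of the $\#$-padding in the convolution, is where the care lies. Determinism of $B$ and the B\"{u}chi acceptance condition (a flag bit marking the states entered upon a success) are then routine, and combined with the measure computation above this gives a deterministic B\"{u}chi automaton of measure zero recognizing $F(\mathcal{U})$.
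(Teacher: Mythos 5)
Your proof is correct, and it takes a genuinely different (and leaner) route than either of the paper's two arguments. The paper's direct construction launches a fresh search at \emph{every} position, so it must merge the resulting markers by state to keep the state space finite and then introduce a red/grey two-phase bookkeeping device to convert ``every marker eventually dies'' into a B\"{u}chi condition. You avoid all of that by exploiting the nesting from Theorem 1 in a way the paper's direct proof never does: since $X\in F(\mathcal{U})$ iff $X\in[U_n]$ for arbitrarily large $n$, a single sequential search that relaunches at the current position upon each success suffices, and the length condition $x\in U_n\Rightarrow\vert x\vert>n$ is exactly what guarantees (i) each relaunch level strictly increases and (ii) a relaunched search never overshoots its witness, both of which you correctly identify as the crux. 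Your argument is closer in spirit to the paper's indirect proof, which also reduces to ``infinitely many levels are witnessed,'' but that proof then appeals to the prefix-free normalization and the standard characterization of deterministic B\"{u}chi languages via a regular $R$ with $Pref(X)\cap R$ infinite, whereas you build the automaton explicitly with a two-component state $(r,p)$ plus a flag. The trade-off: the indirect proof is shorter if one is willing to cite the $R$-characterization as a black box; your construction is self-contained and strictly simpler than the paper's direct one. Your measure computation via continuity from above of $\mu$ along the decreasing sequence $[U_n]$ is also fine and matches what the paper does elsewhere via dominated convergence.
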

\noindent
We are going to provide two genuinely different proofs. The first proof is based on a direct construction of a desired B\"{u}chi automata. The second proof is based on an indirect argument, where we do not know how a corresponding automaton is going to look like. 
\begin{proof}[Direct Construction]
According to Theorem 1, we can assume that $\mathcal{U}=(U_i)_{i\in I}$, where $I=0^*$ and the length of words in $U_i$ is larger than that of $i$ for all $i\in I$. Our aim is to construct a deterministic B\"{u}chi automaton $M$ such that $F(\mathcal{U})=L(M)$. The final automaton $M$ is based on deterministic movements of multiple markers inside of a some automaton. The idea is similar to the one of constructing a deterministic finite automaton equivalent to a given nondeterministic finite automaton. Recall that $X\in F(\mathcal{U})$ if and only if for all $i\in I$, there is $x_i\in U_i$ such that $x_i\prec X$. So, $M$ should somehow verify if above condition is satisfied while reading elements of $X$. The idea is to check for an existence of $x_i$ for each $i$ one-by-one. Suppose we want to check existence of $x_i$ for $i=0^n$. Let $N$ be an automaton recognizing automatic relation corresponding to $\mathcal{U}$. First we feed a following block into $N$:
\[
\begin{pmatrix}
X[n]\\
0^n
\end{pmatrix}
\]
Then we associate a red marker with this process to keep track a state of $N$, when subsequent blocks in the form of $(X_i,\#)$ are fed. If the given red marker is ever to visit an accepting state of $N$, then it is bound to disappear. Observe that the condition on existence of $x_i$ is now replaced by the condition on a disappearance of the created red marker. Now this procedure can be performed for all $n$ as we read elements $X$ bit-by-bit. The global condition, $X\in F(\mathcal{U})$, can be formulated as eventual disappearance of all created red markers. This procedure poses two kinds of problems:
\begin{itemize}
	\item Number of red markers might grow indefinitely
	\item As new red markers are being constantly produced, we might fail to keep track of the disappearance of older red markers. 
\end{itemize}
To address the first issue, we are going to merge all red markers corresponding to the same state. This ensures that at any given stage, we have at most $\vert N \vert$, number of states in automaton corresponding to $N$, many red markers. In order to address the second issue, we introduce a notion of old and new red markers. We refer to old red markers as grey markers. Then we divide overall dynamics into phases. At the beginning of each phase, all existing red markers are transformed into grey markers. Grey markers behave just as red markers. The end of each phase is characterized by the disappearance of all grey markers, after which the next phase begins. Observe that, the disappearance of all created red markers is equivalent to witnessing ends of mentioned phases infinitely often. Thus, if one sets all accepting states to the end of phases and assigns states for each possible configuration of markers, then we have a condition corresponding to an acceptance condition for the final B\"{u}chi automaton. This completes the construction.
\end{proof}

\begin{proof}[Indirect Construction]
Given an ART, $\mathcal{U}=(U_i)_{i\in I}$, satisfying properties given in Theorem 1, we wish to build a deterministic B\"{u}chi automaton of measure zero recognizing the language $F(\mathcal{U})$. Since $F(\mathcal{U})$ is of measure zero, it suffices to show that $F(\mathcal{U})$ is recognized by some deterministic B\"{u}chi automaton. Recall that given class $\mathcal{C}$ is recognizable a deterministic B\"{u}chi automaton if there is a regular language $R$ satisfying:
\[
X\in \mathcal{C} \Leftrightarrow Pref(X)\cap R\text{ is an infinite set}
\] 
So we need to exhibit a regular language $R$ satisfying the above condition for $F(\mathcal{U})$. Recall that $X\in F(\mathcal{U})$ iff for all $i\in I$, there is $x_i\in U_i$ such that $x_i\prec X$. Since $(U_i)_{i\in I}$ is assumed to be a decreasing sequence, we have $X\in F(\mathcal{U})$ if and only if there is infinitely many $i\in I$ such that the mentioned property holds, i.e. $\exists^{\infty} i\in I$ such that there is $x_i\in U_i$ with $x_i\prec X$. Then we turn each $U_i$ into a prefix-free language, which gives us a new automatic family $\mathcal{V} = (V_i)_{i\in I}$. Formally,
\[
V_i = \{x\in U_i: \forall y\in U_i(y\not\prec x)\}
\]
We can easily verify that $X\in F(\mathcal{U})$ if and only if $\exists^{\infty}i\in I$ such that there is $x_i\in V_i$ with $x_i\prec X$. As for a forward direction, given $X\in F(\mathcal{U})$, we have $\exists^{\infty}i\in I$ there is $x_i\in U_i$ such that $x_i\prec X$. For each mentioned $i\in I$, we select $y_i\preceq x_i\prec X$ such that $y_i\in V_i$. On the other hand, suppose that $\exists^{\infty}i\in I$ such that there is $x_i\in V_i$ with $x_i\in X$. By a virtue of $V_i$ being a subset of $U_i$, we have that $\exists^{\infty}i\in I$ such that there is $x_i\in U_i$ with $x_i\prec X$. Finally, let us define the desired regular language $R$:
\[
R = \bigcup_{i\in I}V_i = \{x\mid \exists i(x\in V_i)\}
\]
Due to the first-order definability for automatic structures, $R$ is indeed a regular language. Let us show that it satisfies the desired properties. Suppose that $X\in F(\mathcal{U})$, then we know $\exists^{\infty}i\in I$ such that there is $x_i\in V_i$ with $x_i\prec X$. Since the initial $\mathcal{U}$ is assumed to satisfy the condition on the length of words, we have
\[
x\in V_i \Rightarrow \vert x \vert > \vert i \vert
\]
Together with former observation, we conclude that $R\cap Pref(X)$ is indeed an infinite set. On the other hand, suppose that $R\cap Pref(X)$ is infinite. Since each $V_i$ is a prefix-free language, no two elements of $Pref(X)$ can belong to the same $V_i$. Hence, there are infinitely many $i\in I$ such that there is $x_i\in V_i$ with $x_i\prec X$. This implies that $X\in F(U)$. Thus, $R$ indeed satisfies the desired properties. 
\end{proof}
\noindent
Now we state a converse direction of the previous theorem.
\begin{theorem}
Let $M$ be a deterministic B\"{u}chi automaton of measure zero. Then there is ART $\mathcal{U}$ such that $L(M)=F(\mathcal{U})$.
\end{theorem}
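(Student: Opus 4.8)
The plan is to reverse the indirect construction from the previous theorem: rather than extracting a regular language of prefixes from an ART, I would read such a language directly off the deterministic Büchi automaton $M=(S,f,s_0,F)$ and then repackage it as an automatic family. Since $M$ is deterministic, the run $S(X)$ is uniquely determined by $X$, and the key observation is that $X\in L(M)$ exactly when the run visits $F$ infinitely often, which for a deterministic automaton is equivalent to saying that infinitely many prefixes of $X$ drive $s_0$ into an accepting state. Accordingly I would set
\[
R=\{w\in\Sigma^*:\ s_0\cdot w\in F\},
\]
a regular language, and record the characterization that $X\in L(M)$ if and only if $Pref(X)\cap R$ is infinite.

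Next I would turn $R$ into an automatic family indexed by $I=0^*$ by letting
\[
U_i=\{w\in R:\ |w|>|i|\}.
\]
The relation $\{(w,i):w\in U_i\}$ is first-order definable from the regular language $R$ together with the automatic length-comparison relation, so by Proposition~1 the collection $(U_i)_{i\in I}$ is a genuine automatic family. The design is arranged so that $X\in[U_i]$ holds precisely when $X$ has a prefix in $R$ longer than $|i|$; intersecting over all $i$ then forces $X$ to possess prefixes in $R$ of unbounded length, i.e. $Pref(X)\cap R$ infinite. Hence $F(\mathcal{U})=\bigcap_{i\in I}[U_i]=L(M)$, which is the desired equality.

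The remaining obligation, and the only place where the hypothesis $\mu(L(M))=0$ enters, is the measure condition $\liminf_{i\in I}\mu[U_i]=0$. Here I would observe that the classes $[U_{0^n}]$ form a \emph{decreasing} sequence of open, hence measurable, sets as $n$ grows, since enlarging $|i|$ only tightens the length constraint. Continuity from above of the Lebesgue measure $\mu$ (a finite measure on the Cantor space) then yields
\[
\lim_{n\to\infty}\mu[U_{0^n}]=\mu\Big(\bigcap_{n}[U_{0^n}]\Big)=\mu(L(M))=0,
\]
so in particular the liminf vanishes and $\mathcal{U}$ is a bona fide ART.

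I expect the main subtlety to lie not in the combinatorics but in this last step: one must be sure that the decreasing-sequence limit genuinely coincides with the measure of the intersection $L(M)$, which relies on continuity from above rather than on the cover-based definition of null sets stated earlier (this is where the reference to Oxtoby's measure theory is invoked). Everything else — the prefix characterization of deterministic Büchi acceptance and the automaticity of $(U_i)_{i\in I}$ — is routine once $R$ is in hand.
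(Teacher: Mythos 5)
Your proposal is correct and takes essentially the same route as the paper: both define $R=\{w:\ s_0\cdot w\in F\}$, set $U_i$ to be the words of $R$ of length at least (resp.\ greater than) $|i|$ over $I=0^*$, identify $F(\mathcal{U})$ with $L(M)$ via the prefix characterization of deterministic B\"{u}chi acceptance, and obtain $\lim_i\mu[U_i]=\mu(L(M))=0$ from the decreasing classes $[U_i]$ (the paper invokes dominated convergence where you invoke continuity from above of the finite measure; these justify the same step).
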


\begin{proof}
Let $R$ be a language corresponding to $M$, when it is viewed as a finite automaton, instead of the determinsitic B\"{u}chi automaton. Set $I=0^*$ and define $\mathcal{U} = (U_i)_{i\in I}$ as follows:
\[
U_i = \{x\mid \vert x \vert \ge \vert i \vert,\, x\in R\}
\]
Due to the first-order definability of automatic structures $\mathcal{U}$ is a proper automatic family. Moreover $\mathcal{U}$ is a decreasing family in a sense that $U_i\supseteq U_j$ for $i\prec j$. Recall that $X\in L(M)$ if and only if $Pref(X)\cap R$ is an infinite set. The latter is equivalent to saying that $X\in [U_i]$ for all $i\in I$, i.e. $X\in F(\mathcal{U})$. This argument shows that $L(M)=F(\mathcal{U})=\bigcap_{i\in I}[U_i]$. Since a measure of the Cantor space itself is bounded, we apply Lebesgue dominated convergence theorem to the previous equation, which gives us:
\[
\mu(L(M)) = \lim_{i\in I} \mu[U_i]
\]
As $\mu(L(M))=0$, we have that $\mathcal{U}$ forms a proper ART which satisfies the desired properties. 
\end{proof}

\begin{corollary}
Let $\mathcal{C}$ be a class of the Cantor Space. The following are equivalent:
\begin{enumerate}
	\item There is ART $\mathcal{U}$ such that $\mathcal{C}=F(\mathcal{U})$.
	\item There is a deterministic B\"{u}chi automaton of measure zero such that $\mathcal{C}=L(M)$
\end{enumerate}
\end{corollary}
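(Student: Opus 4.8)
The plan is to derive this corollary directly from Theorems 3 and 4, which together already establish both implications; the corollary merely packages them as a biconditional, so no genuinely new argument is required.

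First I would prove the implication $(1)\Rightarrow(2)$. Assume there is an ART $\mathcal{U}$ with $\mathcal{C}=F(\mathcal{U})$. By Theorem 3, the covering region $F(\mathcal{U})$ is recognized by a deterministic B\"{u}chi automaton $M$ of measure zero, i.e. $L(M)=F(\mathcal{U})$. Chaining the two equalities gives $\mathcal{C}=F(\mathcal{U})=L(M)$, so $M$ is exactly the desired automaton of measure zero recognizing $\mathcal{C}$.

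Next I would prove the converse implication $(2)\Rightarrow(1)$. Assume there is a deterministic B\"{u}chi automaton $M$ of measure zero with $\mathcal{C}=L(M)$. By Theorem 4, there is an ART $\mathcal{U}$ with $L(M)=F(\mathcal{U})$. Again chaining equalities yields $\mathcal{C}=L(M)=F(\mathcal{U})$, so $\mathcal{U}$ witnesses condition $(1)$. Since both directions hold, conditions $(1)$ and $(2)$ are equivalent.

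I do not expect any substantive obstacle here: all of the technical work — the marker/phase construction (or the prefix-free regular language $R$) for the forward direction, and the dominated-convergence argument establishing $\mu(L(M))=\lim_{i\in I}\mu[U_i]$ for the converse — has already been carried out in Theorems 3 and 4. The only care needed is bookkeeping: ensuring the class $\mathcal{C}$ is carried faithfully through each equality and that the automaton produced in $(1)\Rightarrow(2)$ is genuinely of measure zero, which is guaranteed because $F(\mathcal{U})$ is a nullset. Thus the corollary follows immediately by composing the two theorems.
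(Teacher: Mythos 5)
Your proposal is correct and matches the paper exactly: the corollary is stated there without proof precisely because it is the immediate conjunction of Theorem 3 (the forward direction) and Theorem 4 (the converse), which is just what you do. The only bookkeeping point you raise --- that the automaton from the forward direction is genuinely of measure zero --- is already built into the statement of Theorem 3, so nothing further is needed.
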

\noindent
We have encountered deterministic B\"{u}chi automata of measure zero. A natural question to ask would be: when is a given deterministic B\"{u}chi automaton of measure zero. To answer this question, we need to verify few things given in the form of lemmas.
We should note that some of the following results also appear in the work of Staiger \cite{Null,Monadic}, but we wish to give our own proofs. 

\begin{lemma}
Let $M=(S,f,s_0)$ be an FSM. Then a run of any disjunctive sequence $X$ is going to reach a leaf connectied component, i.e. $\exists i\in \mathbb{N}$ such that $s_0\cdot X[i]\in g$ for some $g\in (M)$. 
\end{lemma}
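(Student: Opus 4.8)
The plan is to first show that the run $S(X)$ eventually settles into a single connected component forever, and then to use disjunctiveness to force that terminal component to be a leaf. First I would record the basic monotonicity of the run with respect to the partial order on $[M]$: since $s_0\cdot X[i+1] = (s_0\cdot X[i])\cdot X_{i+1}$, each state is reachable from its predecessor, so the components visited along the run form a non-increasing chain $[s_0\cdot X[0]]\ge [s_0\cdot X[1]]\ge\cdots$ in $[M]$. Moreover, once the run leaves a component it can never return: if $[s_0\cdot X[i]] = [s_0\cdot X[j]]$ for $i<j$ while the chain strictly decreased in between, mutual reachability would collapse all the intermediate components into one, a contradiction. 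Since $[M]$ is finite, the chain of \emph{distinct} components visited is a strictly decreasing chain of bounded length, so there is some stage $N$ after which the run remains inside a single component $C$.

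Next I would argue that $C\in (M)$, i.e.\ that $C$ is a leaf. Suppose not, so there is a component $g'<C$. Then some $p\in C$ admits a word $w_0$ with $p\cdot w_0\in g'\notin C$; that is, $C$ possesses an \emph{escape}. Mirroring the inductive construction used in the proof of Lemma 1, I would build a single word $v$ that escapes $C$ from every state of $C$ at once: enumerating $C=\{c_1,\ldots,c_k\}$, I process the states one at a time, and whenever $c_j\cdot v$ still lies in $C$ I append to $v$ a word steering $c_j\cdot v$ back to $p$ (possible by strong connectivity of $C$) followed by $w_0$. The crucial monotonicity is that once a state has been driven out of $C$ it stays out forever, by the no-return property above, so the later appendings never spoil earlier escapes. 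After $k$ steps I obtain a word $v$ with $c\cdot v\notin C$ for every $c\in C$.

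Finally I would invoke disjunctiveness. Observe that $v$ occurs in $X$ infinitely often: for each $n$ the block $v^n$ appears as a subword of $X$, and $v^n$ contains at least $n$ disjoint copies of $v$, so $v$ occurs at least $n$ times for every $n$. Hence there is an occurrence of $v$ beginning at some position $i>N$, where the run is at a state $s\in C$; reading $v$ then sends the run to $s\cdot v\notin C$, contradicting the fact that the run never leaves $C$ after stage $N$. Therefore $C$ is a leaf, and any $i>N$ witnesses $s_0\cdot X[i]\in C\in (M)$. The main obstacle I anticipate is the escape-word construction: one must verify carefully that the inductive appending genuinely yields a \emph{uniform} escape word, relying on both the strong connectivity inside $C$ and the no-return property that prevents a previously escaped state from re-entering $C$. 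The passage from ``$v$ appears'' to ``$v$ appears infinitely often'' is the other point that needs the explicit $v^n$ argument rather than the bare definition of disjunctiveness.
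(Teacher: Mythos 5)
Your proof is correct, but it is organized quite differently from the paper's. The paper argues in one shot: for each state $q$ there is a word taking $q$ into a leaf component, and (reusing the inductive concatenation from Lemma~1, together with the fact that leaf components are closed under transitions) these can be merged into a single word $w$ that drives \emph{every} state of $M$ into a leaf component; since the disjunctive $X$ contains $w$ somewhere, one occurrence suffices and the run lands in a leaf. You instead first prove a stabilization fact that holds for arbitrary sequences --- the components visited form a non-increasing chain in the finite poset $[M]$, so the run eventually stays in one component $C$ --- and then rule out $C$ being a non-leaf by building a uniform escape word $v$ for $C$ and invoking the strengthening that a disjunctive sequence contains $v$ infinitely often (via the $v^n$ argument), so some occurrence falls after the stabilization point. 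Both arguments are sound; the paper's is shorter and needs only a single occurrence of its witness word, while yours isolates the reusable dynamical fact that every run stabilizes in a component and makes explicit the no-return property of components, at the cost of a proof by contradiction and the ``infinitely often'' upgrade of disjunctiveness. The one point to tidy is the notation ``$p\cdot w_0\in g'\notin C$'': you mean $g'\neq C$, hence $p\cdot w_0\notin C$; as written it does not typecheck, though the intended meaning is clear and the escape-word construction itself (strong connectivity inside $C$ plus no return to $C$ once left) is carried out correctly.
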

\begin{proof}
Firstly, let us observe that for any state $q\in S$ there is a word $w$ such that $q\cdot w$ is in some leaf connected component. Following an idea in the proof of the lemma 1, it is possible to construct a word $w$ such that no matter from which state one starts with, processing $w$ brings one to a leaf connected component. Given a disjunctive sequence $X$, it contains $w$ as a subword by definition. Hence, a run of disjunctive sequence $X$ on $M$ is going to end up in some leaf connected component. 
\end{proof}
\begin{lemma}
Given a disjunctive sequence $X$, any suffix of $X$ is also a disjunctive sequence. 
\end{lemma}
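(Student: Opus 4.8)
The plan is to argue directly from the definition of a disjunctive sequence. Write the suffix in question as $Y = X[i:]$, so that $Y$ consists of the symbols of $X$ occurring at positions $i+1, i+2, \ldots$. To show that $Y$ is disjunctive I must exhibit, for an arbitrary word $w \in \Sigma^*$, an occurrence of $w$ as a subword of $Y$; equivalently, an occurrence of $w$ in $X$ that starts at some position strictly greater than $i$.

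The key step is a padding trick. The naive attempt --- simply invoking disjunctivity of $X$ to locate $w$ inside $X$ --- is insufficient, since the guaranteed occurrence of $w$ might lie entirely within the first $i$ symbols and thus miss the suffix. To force the occurrence past this initial segment, I would instead apply disjunctivity of $X$ to the padded word $0^i w$ (any fixed word of length $i$ prepended to $w$ works). Since $X$ is disjunctive, $0^i w$ appears in $X$ as a subword, say starting at some position $p \geq 1$. Then the copy of $w$ sitting at the tail of this occurrence begins at position $p + i \geq i + 1$, hence lies entirely within $Y$. As $w$ was arbitrary, every word occurs in $Y$, so $Y$ is disjunctive.

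There is essentially no hard obstacle here; the only subtlety is recognizing that disjunctivity must be applied to a padded word rather than to $w$ itself, so that the resulting occurrence is pushed beyond the first $i$ positions. Conceptually, the same phenomenon can be phrased as the fact that in a disjunctive sequence every word occurs infinitely often (by a counting argument: a word occurring at only $k$ positions can be followed by at most $k$ distinct extensions of each fixed length $n$, contradicting that all $2^n$ such extensions must appear once $2^n > k$); an infinitely-occurring word must in particular occur beyond position $i$. I would nonetheless present the padding argument as the main proof, since it is shorter and self-contained.
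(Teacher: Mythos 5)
Your proof is correct and uses exactly the same padding argument as the paper: apply disjunctivity of $X$ to $0^i w$ so that the trailing copy of $w$ is forced to start at a position greater than $i$, hence to lie in the suffix. Your remark that the occurrence begins at position $p+i \ge i+1$ is in fact slightly more explicit than the paper's own justification.
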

\begin{proof}
Suppose that we want to prove that $X[i:]=X_{i+1}X_{i+2}\ldots$ is disjunctive for some $i$. Let us pick some word $w$. In order to show that $w$ appears as a subword in $X[i:]$, let us observe that $0^iw$ appears as a subword in $X$. Since no part of $w$ could appear in $X[i]=X_1X_2\ldots X_i$, we have that $w$ appears as a subword in $X[i:]$. 
\end{proof}

\begin{theorem}[Disjunctive sequences and automata]
Let $M=(S,f,s_0)$ be an FSM. Then for any disjunctive sequence $X$, we have that $I(X)=g$ for some $g\in (M)$. 
\end{theorem}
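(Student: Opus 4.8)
The plan is to identify a leaf component $g\in(M)$ and establish the two inclusions $I(X)\subseteq g$ and $g\subseteq I(X)$. First I would apply Lemma 2 to obtain an index $i$ and a leaf component $g\in(M)$ with $q:=s_0\cdot X[i]\in g$. The key structural observation is that once the run enters $g$ it can never leave it: if $r\in g$ and $r\cdot a=r'$ for a letter $a$, then $r\ge r'$, so the component of $r'$ lies below $g$ in the partial order on $[M]$; since $g$ is minimal, the component of $r'$ must equal $g$, i.e. $r'\in g$. By induction every state $s_0\cdot X[n]$ with $n\ge i$ lies in $g$. As $g$ is finite, the set of states occurring infinitely often satisfies $I(X)\subseteq g$ immediately.

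The substantial direction is $g\subseteq I(X)$, that is, every state of $g$ is visited infinitely often. Fix $p\in g$. The naive idea of using a single word witnessing reachability of $p$ from $q$ fails, because later occurrences of such a word are processed starting from arbitrary states of $g$, not from $q$. To repair this, I would construct, in the style of Lemma 1, a single word $w$ that drives \emph{every} state of $g$ through $p$. Enumerate $g=\{r_1,\dots,r_k\}$ and build $w=v_1\cdots v_k$ inductively: having formed $w_{j-1}=v_1\cdots v_{j-1}$, the state $r_j\cdot w_{j-1}$ still lies in $g$ (the run stays in $g$), and since $g$ is strongly connected there is a word $v_j$ with $(r_j\cdot w_{j-1})\cdot v_j=p$. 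Then for each $j$ we have $r_j\cdot w_j=p$, so processing $w$ from any state $r_j$ visits $p$ after its prefix $w_j$.

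To finish, I would invoke disjunctiveness. By Lemma 3 every suffix of $X$ is disjunctive, so $w$ cannot occur only finitely often in $X$, as otherwise a sufficiently late suffix would omit $w$. Hence $w$ occurs at infinitely many positions, and infinitely many of these are $\ge i$. Each such occurrence begins in a state $s_0\cdot X[n]\in g$, and by construction of $w$ the run visits $p$ somewhere inside that occurrence; since these occurrences start at arbitrarily large positions, $p$ is visited infinitely often, so $p\in I(X)$. As $p\in g$ was arbitrary, $g\subseteq I(X)$, and combining with the first inclusion yields $I(X)=g$.

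The main obstacle is precisely the reverse inclusion $g\subseteq I(X)$: one must upgrade ``$p$ is reachable from the entry state'' to ``$p$ is visited infinitely often,'' and the clean way to do this is to produce a word reaching $p$ from \emph{all} states of $g$ simultaneously, after which disjunctiveness supplies infinitely many triggering occurrences. I would also verify the two points used silently in the construction: that $g$ is strongly connected (immediate from the definition of a connected component as a mutual-reachability class) and that the run genuinely remains in $g$ at every step after entering it.
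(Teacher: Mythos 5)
Your proof is correct and follows essentially the same route as the paper's: you use Lemma 2 to land in a leaf component $g$, build (as in Lemma 1) a single word that drives every state of $g$ through the target state, and invoke Lemma 3 on suffixes to get infinitely many occurrences of that word. The only differences are cosmetic --- you argue directly that each $p\in g$ is hit infinitely often where the paper argues by contradiction, and you spell out the closure of $g$ under transitions, which the paper leaves implicit.
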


\begin{proof}
According to Lemma 2 processing any disjunctive sequence $X$ will lead the automaton $M$ to some leaf connected component. Let $g\in (M)$ to be a leaf connected component in which $X$ ends up to be. Clearly $I(X)\subseteq g$, so we we need to show that $g\subseteq I(X)$. For an arbitrary state $q\in g$, we can apply the argument used in the proof of lemma 1 in the context of an FSM $g$ and an accepting state $q\in g$. There is a word $w_q$ such that processing $w_q$ starting from any state of $g$ would visit $q$ at least once. Now, if $I(X)\neq g$, then there should be a state $q\in g$ such that $q$ is never visited from some point onwards during the run of $X$. This implies that for some $i\in \mathbb{N}$, $X[i:]=X_{i+1}X_{i+2}\ldots$ does not contain $w_q$ as a subword. Thus, $X[i:]=X_{i+1}X_{i+2}\ldots$ is not disjunctive, which contradicts to our initial assumption according to Lemma 3. 
\end{proof}
\noindent
Finally, we provide a condition of a deterministic B\"{u}chi automaton to be of measure zero.
\begin{theorem}[Measure and B\"{u}chi automata]
Let $M=(S,f,s_0,F)$ be a deterministic B\"{u}chi automaton. Then $\mu(L(M))=0$ if and only if $F\cap g=\emptyset$ for all leaf connected components $g\in (M)$.
\end{theorem}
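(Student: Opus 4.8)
The plan is to prove both implications by exploiting that the class of disjunctive sequences $\mathcal{D}$ has full measure, together with the theorem on disjunctive sequences and automata, which identifies $I(X)$ for every disjunctive $X$ as a single leaf connected component on which the run accumulates. In effect this reduces the measure-zero question to the purely combinatorial question of whether any accepting state sits in a leaf component, and the two directions of the biconditional then fall out by reading that combinatorial fact in the two possible ways.

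For the direction assuming $F \cap g = \emptyset$ for all $g \in (M)$, I would take an arbitrary disjunctive sequence $X$. By the theorem on disjunctive sequences and automata, $I(X) = g$ for some $g \in (M)$, so $I(X) \cap F = g \cap F = \emptyset$ and $X$ is rejected. Hence $L(M) \cap \mathcal{D} = \emptyset$, i.e. $L(M) \subseteq \overline{\mathcal{D}}$, and since $\mu(\mathcal{D}) = 1$ this forces $\mu(L(M)) = 0$ at once.

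For the converse I would argue contrapositively: assume some leaf component $g$ contains an accepting state $q \in F \cap g$, and exhibit a positive-measure subclass of $L(M)$. Since every state is reachable from $s_0$, fix a word $w$ with $p := s_0 \cdot w \in g$. The structural point is that a leaf (minimal) component is absorbing: any successor of a state of $g$ lies in a component $\le g$, hence in $g$ by minimality, so once a run enters $g$ it never leaves. Restricting $M$ to the finite state machine $(g, f|_{g}, p)$ then yields a strongly connected machine that is its own unique leaf component. For any $X = wY$ with $Y = X[|w|:]$ disjunctive, the theorem on disjunctive sequences and automata applied inside this machine shows the run of $Y$ visits every state of $g$ infinitely often; together with absorption this gives $I(X) = g \ni q$, so $X \in L(M)$. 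Therefore $\{w\} \cdot \mathcal{D} \subseteq L(M)$, and by the measure-scaling of the prefix map $Y \mapsto wY$ we obtain $\mu(\{w\} \cdot \mathcal{D}) = 2^{-|w|}\mu(\mathcal{D}) = 2^{-|w|} > 0$, contradicting $\mu(L(M)) = 0$.

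The delicate step is the converse: it is not enough that disjunctive sequences merely reach $g$, since one needs a positive measure of sequences that both reach $g$ and then visit the accepting state $q$ infinitely often. This is exactly what the absorption of the leaf component (via the minimality argument) plus a second application of the theorem on disjunctive sequences and automata to the strongly connected sub-automaton on $g$ supply; transporting $\mu(\mathcal{D}) = 1$ to the shifted suffix then converts the combinatorial conclusion $I(X) = g$ into the required measure lower bound. The remaining checks — that $(g, f|_{g}, p)$ is a genuine FSM with all states reachable from $p$, and the measurability and $2^{-|w|}$-scaling of $\{w\} \cdot \mathcal{D}$ — are routine.
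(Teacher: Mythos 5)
Your proof is correct and follows essentially the same route as the paper: both directions reduce to the fact that $\mu(\mathcal{D})=1$ together with Theorem~5, with the easy direction placing $L(M)$ inside the nondisjunctive sequences and the converse producing the positive-measure subclass $\{w\}\cdot\mathcal{D}\subseteq L(M)$. Your treatment of the converse is in fact slightly more careful than the paper's, since you explicitly justify the absorption of the leaf component and the legitimacy of applying Theorem~5 to the restricted machine $(g,f|_g,p)$, steps the paper leaves implicit.
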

\begin{proof}
Forward direction $(\Rightarrow):$\\
Assume that $q\in F\cap g$ for some leaf cpnnected component $g$. Since we assume that all states are reachable from starting state, $s_0$, there is a word $w$ such that $s_0\cdot w=q$. Now consider any disjunctive sequence $X\in\mathcal{D}$. By applying Theorem 5 for a component $g$ and a starting state $q$, we infer that $wX$ visits every element of $g$ infinitely often. This means that $wX$ is going to be an accepting sequence. Since the measure of all disjunctive sequences is one, we have that $\mu(w\mathcal{D})=2^{-\vert w \vert}$. Hence $\mu(L(M))\ge 2^{-\vert w \vert}>0$. Thus, the forward direction follows.\\
Backward direction $(\Leftarrow):$\\
Assume $F\cap g=\emptyset$ for all leaf components $g\in (M)$. According to Theorem 5, for any disjunctive $X$, $I(X)=g$ for some leaf component $g\in (M)$. This means that any sequence accepted by $M$ is not disjunctive, and $L(M)$ is a subcollection of nondisjunctive sequences. As a latter has the measure of zero. Former should also have the measure of zero. This completes the backward direction, and the overall proof of the given theorem.
\end{proof}
\noindent
Since deterministic Muller automata are close to deterministic B\"{u}chi automata, we provide same type of condition for deterministic Muller automata. Later this observation turns out to be useful for us. 
\begin{theorem}[Measure and Muller Automata]
Let $M=(S,f,s_0,F)$ be a deterministic Muller automaton. Then $\mu(L(M))=0$ if and only if $F\cap (M) = \emptyset$. 
\end{theorem}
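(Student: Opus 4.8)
The plan is to mirror the proof of Theorem 6, using Theorem 5 as the central tool, but adapting the measure-zero criterion to the Muller acceptance condition: here the relevant question is whether some leaf connected component, \emph{as a set}, belongs to the accepting collection $F$, rather than whether $F$ merely intersects a leaf component. Both directions will be driven by the fact that disjunctive sequences have full measure and that, by Theorem 5, the run of any disjunctive sequence on the underlying FSM has its infinitely-visited set $I(X)$ equal to exactly one leaf connected component.

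For the backward direction $(\Leftarrow)$, I would assume $F \cap (M) = \emptyset$ and argue that $L(M)$ contains no disjunctive sequence. Indeed, given $X \in \mathcal{D}$, Theorem 5 yields $I(X) = g$ for some $g \in (M)$; since $g \notin F$ by hypothesis, the Muller acceptance condition $I(X) \in F$ fails, so $X \notin L(M)$. Hence $L(M)$ is contained in the class of nondisjunctive sequences, which has measure zero, giving $\mu(L(M)) = 0$.

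For the forward direction $(\Rightarrow)$, I would argue the contrapositive: assuming $F \cap (M) \neq \emptyset$, pick a leaf component $g \in F$ and a state $q \in g$, together with a word $w$ with $s_0 \cdot w = q$ (available by the standing reachability assumption). The key step is to show that for every disjunctive $X$ we have $I(wX) = g$, so that $wX \in L(M)$. The prefix $w$ contributes only finitely many steps and hence does not affect the infinitely-visited set, so $I(wX)$ is determined by running $X$ from $q$. Because $g$ is a leaf (minimal) component in the reachability order, every state reachable from $q$ already lies in $g$, so the run of $X$ stays inside the strongly connected set $g$; restricting the FSM to $(g, f|_{g}, q)$, its unique leaf component is $g$ itself, and Theorem 5 gives $I(wX) = g$. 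Since $g \in F$, every such $wX$ is accepted, so $w\mathcal{D} \subseteq L(M)$ and $\mu(L(M)) \ge \mu(w\mathcal{D}) = 2^{-\vert w \vert} > 0$, contradicting $\mu(L(M)) = 0$.

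The main obstacle I anticipate is the forward direction's claim that $I(wX) = g$ \emph{exactly}. One must justify, first, that only states of $g$ are reachable from $q$ (using minimality of $g$), and second, that Theorem 5 applied to the sub-FSM started inside $g$ forces the run to visit \emph{all} of $g$ infinitely often, not merely some subset. This is precisely where the Muller criterion diverges from the B\"{u}chi criterion of Theorem 6: for Muller acceptance we need the entire component $g$ to be realized as $I(wX)$ so as to match a set in $F$, whereas for B\"{u}chi it sufficed that a single accepting state recur.
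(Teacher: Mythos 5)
Your proposal is correct and follows essentially the same route as the paper's proof: both directions rest on Theorem 5 applied to disjunctive sequences, with the forward direction obtained by prefixing a word $w$ reaching a state of an accepting leaf component and observing $I(wX)=g$ for disjunctive $X$. Your added justification of why $I(wX)$ equals $g$ exactly (minimality of $g$ trapping the run, plus Theorem 5 on the restricted FSM) is a welcome elaboration of a step the paper states without detail, but it is not a different argument.
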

\begin{proof}
The proof is similar to the one of above theorem. Forward direction\\$(\Rightarrow):$\\
Again assume that $g\in F$ for some leaf component $g\in(M)$. Let us choose an arbitrary state, $q\in g$. By an initial assumption on automata, there is a word $w$ such that $s_0\cdot w=q$. Then applying Theorem 5 for a connected component $g$ with a starting state $q$, we observe that for $wX$ where $X\in \mathcal{D}$, we have $I(wX)=g$. Since the measure of disjuncitve sequences is one, $\mu(w\mathcal{D})=2^{-\vert w \vert}$. This implies that $\mu(L(M))\ge 2^{-\vert w \vert}>0$. This completes the forward direction.\\
Backward direction, $(\Leftarrow):$\\
Assume $g\not\in F$ for any leaf connected component $g\in(M)$. According to the Theorem 5, for any disjunctive sequence $X$, we have $I(X)=g$ for some leaf component $g\in(M)$. For any sequence $X$, accepted by $M$, we have that $X$ is not disjunctive. Hence $L(M)$ is a subcollection of nondisjunctive sequences. As a latter has the measure zero, former should also have the measure zero. This completes the backward direction, and the whole proof of the given theorem. 
\end{proof}

\subsection{Characterization}
\noindent
In this section we summarize our observations in the form of a characterization result for sequences covered by an ART. Again, note that equivalence of last three statements is known from Staiger \cite{Null,Monadic}
\begin{theorem}[Characterization]
Given an infinite sequence $X$ the following are equivalent:
\begin{enumerate}
	\item $X\in F(\mathcal{U})$ for some ART $\mathcal{U}$
	\item $X$ is accepted by a deterministic B\"{u}chi automaton of measure zero.
	\item $X$ is accepted by a deterministic Muller automaton of measure zero.
	\item $X$ is not a disjunctive sequence
\end{enumerate}
\end{theorem}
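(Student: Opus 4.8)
The plan is to prove the four statements equivalent by a single cycle of implications, leaning heavily on the machinery already assembled. Two of the links come essentially for free: statements (1) and (2) are interchangeable by the preceding Corollary (which itself packages Theorem 3 and Theorem 4), so $X\in F(\mathcal{U})$ for some ART holds if and only if $X\in L(M)$ for some deterministic B\"{u}chi automaton of measure zero. It therefore suffices to close the loop $(2)\Rightarrow(3)\Rightarrow(4)\Rightarrow(1)$, after which all four conditions coincide.

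First I would dispatch $(2)\Rightarrow(3)$, the purely automata-theoretic step. Given a deterministic B\"{u}chi automaton $M=(S,f,s_0,F)$ of measure zero, I would reinterpret its acceptance condition as a Muller condition by taking the accepting collection $\mathcal{F}=\{T\subseteq S\mid T\cap F\neq\emptyset\}$. Since ``some state of $F$ occurs infinitely often'' is exactly ``$I(X)\cap F\neq\emptyset$'', this leaves the recognized class unchanged, so the resulting deterministic Muller automaton has the same language, is again of measure zero, and still accepts $X$. Next comes $(3)\Rightarrow(4)$, where the measure characterizations enter. If $X$ is accepted by a deterministic Muller automaton $M$ of measure zero, then by the theorem relating measure and Muller automata we have $F\cap(M)=\emptyset$, i.e.\ no leaf connected component is an accepting set. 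Were $X$ disjunctive, Theorem 5 would force $I(X)=g$ for some leaf component $g\in(M)$; but Muller acceptance demands $I(X)\in F$, so $g\in F$, contradicting $F\cap(M)=\emptyset$. Hence $X$ is not disjunctive.

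The substantive step is $(4)\Rightarrow(1)$, which turns a combinatorial defect of $X$ into an actual ART. If $X$ is not disjunctive, I would fix a word $w$ that never occurs in $X$ as a subword, and for each $n$ let $A_n$ be the set of length-$n$ binary strings avoiding $w$. Setting $I=0^*$ and $U_{0^n}=A_n$ yields an automatic family, since the relation $\{(x,0^n)\mid |x|=n,\ x\text{ avoids }w\}$ is first-order definable from the automatic equal-length relation and the regular predicate ``$x$ avoids $w$''. One checks $X\in[A_n]$ for every $n$ because every prefix of $X$ avoids $w$, so $X\in\bigcap_n[U_{0^n}]=F(\mathcal{U})$. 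Finally the classes $[A_n]$ decrease to the set of sequences avoiding $w$, which is contained in the non-disjunctive sequences and hence a nullset; continuity of measure from above then gives $\mu[U_{0^n}]\to 0$, so $\liminf_i\mu[U_i]=0$ and $\mathcal{U}$ is a genuine ART covering $X$.

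I expect the main obstacle to be precisely this last bridge $(4)\Rightarrow(1)$: the other links are either quoted wholesale from earlier results or are routine condition-translations, whereas here one must manufacture a machine-definable family out of a single avoided word and, crucially, verify the measure condition $\liminf_i\mu[U_i]=0$ rather than merely the automaticity of the family. The delicate point is that avoiding a fixed pattern is a \emph{safety} constraint, so the cover shrinks monotonically and the vanishing of its measure has to be extracted from the fact that pattern-avoiding sequences form a nullset, rather than from any direct counting.
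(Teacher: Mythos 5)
Your proof is correct, and it diverges from the paper's in two places worth noting. For $(2\Rightarrow 3)$ the paper invokes McNaughton's theorem (every nondeterministic B\"{u}chi automaton has an equivalent deterministic Muller automaton) and then observes that the deterministic B\"{u}chi automaton is a special case; you instead keep the same transition structure and simply re-read the B\"{u}chi condition as the Muller collection $\mathcal{F}=\{T\subseteq S\mid T\cap F\neq\emptyset\}$. Your route is more elementary and buys an identical state space for free, whereas the paper's appeal to McNaughton is heavier than needed for a deterministic automaton. For $(4\Rightarrow 1)$ both you and the paper build the same family $U_{0^n}=\{x\mid |x|=n,\ x\text{ avoids }w\}$, but you verify $\mu[U_{0^n}]\to 0$ by continuity of measure from above, using that the nested classes $[U_{0^n}]$ shrink to the $w$-avoiding sequences, a subset of the non-disjunctive sequences and hence a nullset; the paper instead proves the explicit quantitative bound $\mu[U_{0^{dk}}]\le(1-2^{-d})^k$ by chopping words into blocks of length $d=|w|$. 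Your argument is shorter but leans on the cited fact $\mu(\mathcal{D})=1$, while the paper's explicit bound is self-contained and is reused later (in the theorem establishing exponential decay of the measures $\mu[V_j]\le\gamma^{|j|}$), so the counting version has additional downstream value. The remaining links --- $(1\Leftrightarrow 2)$ via the Corollary and $(3\Rightarrow 4)$ via the Muller measure criterion combined with Theorem 5 --- match the paper exactly.
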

\begin{proof}
$(1\Rightarrow 2):$\\
Suppose a sequence $X$ is covered by an ART $\mathcal{U}$. By Theorem 1, $F(\mathcal{U})$ is recognized by a deterministic B\"{u}chi automaton $M$ of measure zero. So clearly, $X$ is accepted by $M$, which has desired properties.\\
$(2\Rightarrow 3):$\\
Suppose that a sequence $X$ is accepted by a deterministic B\"{u}chi automaton $M$ of measure zero. It is known that nondeterministic B\"{u}chi automata is equivalent to deterministic Muller automata. As any deterministic B\"{u}chi automaton can be viewed as a nondeterministic B\"{u}chi automaton, every language recognized by a deterministic B\"{u}chi automaton has a corresponding determinsitic Muller automaton. So, there is a deterministic Muller automaton $N$ such that $L(M)=L(N)$. As $X\in L(N)$, we have that $X$ is accepted by $N$.\\
$(3\Rightarrow 4):$\\
Suppose that a sequence $X$ is accepted by a deterministic Muller automaton $M$ with accepting collection $F$. According to Theorem 7, we have $F\cap(M)=\emptyset$. Combining this observation with Theorem 5, we obtain that any sequence accepted by $M$ is nondisjunctive. In particular, $X$ is not disjunctive. \\
$(4 \Rightarrow 1):$\\
Suppose that $X$ is nondisjunctive, i.e. there is some word $w$ which does not appear in $X$ as a subword. To construct ART covering $X$, let us take:
\[U_i = \{x\mid \vert x \vert = \vert i \vert, \, x \text{ does not contain }w\text{ as a subword}\}
\]
with $I=0^*$. Clearly, $\mathcal{U}=(U_i)_{i\in I}$ is an automatic family. Furthermore, a measure of $\mu[U_i]$ could be bounded in a straightforward manner. Assuming that $\vert w \vert = d$, consider an element  $u\in U_{dk}$ for some $k\in \mathbb{N}$. As $u$ can be divided into $k$ blocks of size $d$, and none of the block is allowed to be $w$, we have:
\[\mu[U_{0^{dk}}]\le \left(1-\frac{1}{2^d}\right)^k
\]
As the limiting value of the right hand side approaches zero as $k\to \infty$, we have $\liminf_{i\in I} \mu[U_i]=0$. This shows that $(U_i)_{i\in I}$ is a proper ART. Since $X\in [U_i]$ for all $i\in I$, we have $X\in F(\mathcal{U})$, i.e. $X$ is covered by $\mathcal{U}$. 
\end{proof}

\begin{theorem}[Combinatorial characterization of AR]
An infinite sequence $X$ is automatic random~(AR) if and only if it is disjunctive.
\end{theorem}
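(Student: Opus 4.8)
The plan is to read this statement off directly from the Characterization theorem (Theorem 8), since the two notions involved have already been linked there. First I would unwind the definition of a random sequence: by the definition of AR, a sequence $X$ is automatic random precisely when $X$ is \emph{not} covered by any ART. This is exactly the failure of condition (1) in Theorem 8, namely the assertion that there is no ART $\mathcal{U}$ with $X \in F(\mathcal{U})$.

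Next I would invoke the equivalence of statements (1) and (4) from Theorem 8, that is, ``$X \in F(\mathcal{U})$ for some ART $\mathcal{U}$'' holds if and only if ``$X$ is not a disjunctive sequence.'' Negating both sides of this biconditional, the failure of (1) is equivalent to the failure of (4), and the failure of (4) is simply the assertion that $X$ \emph{is} disjunctive. Chaining the two equivalences then yields $X$ is AR $\iff$ $X$ is not covered by any ART $\iff$ $X$ is disjunctive, which is the desired claim.

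The proof carries essentially no new content beyond Theorem 8; the only point requiring a moment's care is the logical negation step, where one must check that ``not covered by \emph{any} ART'' is the genuine negation of ``covered by \emph{some} ART'' (the quantifier duality $\neg \exists = \forall \neg$), so that the contrapositive of the $(1)\Leftrightarrow(4)$ equivalence applies cleanly. I do not anticipate any real obstacle, since all of the combinatorial and measure-theoretic work has already been carried out in the earlier theorem: in particular the construction in $(4 \Rightarrow 1)$ of an explicit ART covering any nondisjunctive sequence, and the chain $(1 \Rightarrow 2 \Rightarrow 3 \Rightarrow 4)$ showing that every ART-covered sequence must be nondisjunctive. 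Thus the present statement is best viewed as a clean restatement of that equivalence in the language of randomness rather than coverings.
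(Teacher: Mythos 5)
Your proposal is correct and is essentially identical to the paper's own proof: both simply negate the $(1)\Leftrightarrow(4)$ equivalence of Theorem 8, noting that AR means ``not covered by any ART.'' Your extra remark on the quantifier duality is harmless and accurate but adds nothing beyond what the paper already does.
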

\begin{proof}
Focusing on $(1)$ and $(4)$ of the above theorem~(Theorem 8), we are aware of the following equivalence:
\[X \text{ is covered by some ART }\Leftrightarrow X\text{ is not disjunctive }
\]
This is equivalent to saying:
\[X \text{ is AR}\Leftrightarrow X \text{ is disjunctive } 
\]
This way, we obtained a complete combinatorial equivalent of AR condition.
\end{proof}

\subsection{On relation between ART and MART}
At last, we would like to say few words regarding a relationship between ART and MART. Clearly ART subsumes MART, because each MART can be viewed as an ART. A natural questions to ask is an equivalence of these notions. Do these notions coincide or do they give rise to different classes of randomness? We are going to show that these two notions are not equivalent, i.e. there is a class $\mathcal{C}$ such that $\mathcal{C} = F(\mathcal{U})$ for some ART $\mathcal{U}$, yet $\mathcal{C} \neq F(\mathcal{V})$ for any MART $\mathcal{V}$. 
\begin{theorem}
ART and MART are not equivalent as randomness notions.
\end{theorem}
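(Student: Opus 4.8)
The plan is to separate the two notions by exhibiting a single closed class whose natural measure‑zero cover decays geometrically but strictly slower than the rate $2^{-n}$ that a MART is forced to attain at index length $n$. Concretely, I would fix the word $w=11$ and take
\[
\mathcal{C}=\{X : 11 \text{ does not occur in } X \text{ as a subword}\}.
\]
The guiding intuition is that automatic relations can only relate the lengths of their arguments up to an \emph{additive} constant, never multiplicatively; hence an automatic family cannot stretch its index length $n$ into the much longer prefixes (of length $\approx cn$ with $c>1$) that are needed to cover $\mathcal{C}$ down to measure $2^{-n}$. The whole argument is an attempt to turn this heuristic into a rigorous counting contradiction.

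First I would check that $\mathcal{C}$ is the covering region of an ART. Setting $I=0^*$ and $U_{0^m}=\{x : |x|=m,\ 11\text{ does not occur in }x\}$ gives an automatic family, since the relation $|x|=|0^m|$ together with the regular condition ``$x$ avoids $11$'' is first‑order definable, so Proposition 1 applies. Writing $B_m=U_{0^m}$, one has $|B_m|=F_{m+2}$, a Fibonacci number, so $\mu[U_{0^m}]=F_{m+2}2^{-m}\sim \phi^{m}2^{-m}$ with $\phi=(1+\sqrt 5)/2$; since $\phi/2<1$ this tends to $0$, giving $\liminf_m\mu[U_{0^m}]=0$, and $F(\mathcal{U})=\bigcap_m[B_m]=\mathcal{C}$. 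Thus $\mathcal{C}$ is an ART region.

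The heart of the proof is to show that $\mathcal{C}$ is \emph{not} a MART region. Assume for contradiction a MART $\mathcal{V}=(V_{0^n})_{n}$ with $F(\mathcal{V})=\mathcal{C}$. Using the normalization of Theorem 1 I may take $J=0^*$, and by replacing each $V_{0^n}$ with its set of $\prec$‑minimal elements (first‑order definable, hence still automatic, and enlarging neither $[V_{0^n}]$ nor its measure) I may assume each $V_{0^n}$ is prefix‑free with $\mu[V_{0^n}]=\sum_{x\in V_{0^n}}2^{-|x|}\le 2^{-n}$. Let $A$ be the convolution automaton of $\mathcal{V}$, with $q$ states. The key lemma I would prove is a length bound along the all‑zero branches: for every $u\in B_n$ we have $u0^{\omega}\in\mathcal{C}$, so (by coverage and prefix‑freeness) a unique $x\in V_{0^n}$ satisfies $x\prec u0^{\omega}$, and I claim $|x|\le n+q$. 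Indeed, if $|x|>n+q$ then $x=u0^{\,|x|-n}$, and in the convolution of $x$ with $0^n$ every position beyond $n$ carries the letter $\binom{0}{\#}$; a run of length exceeding $q$ repeats a state, so pumping this $\binom{0}{\#}$‑loop produces a word $u0^{\,|x|-n+t}\in V_{0^n}$ comparable to $x$, contradicting prefix‑freeness. This step is exactly where finite memory is essential, and I expect it to be the main obstacle: it is the rigorous form of the ``no multiplicative length stretching'' heuristic.

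Finally I would derive the contradiction by counting. The lemma sends every $u\in B_n$ to a cut‑prefix in $D_n:=\{x\in V_{0^n} : |x|\le n+q\}$ via $u\mapsto x$ with $x\prec u0^{\omega}$. For $x\in D_n$ the number of $u\in B_n$ with $x\prec u0^{\omega}$ is at most $2^{\,n-|x|}$ when $|x|\le n$ (it counts length‑$n$ extensions of $x$), and at most $1$ when $n<|x|\le n+q$ (then $u$ is forced to be the length‑$n$ prefix of $x$, with an all‑zero tail). Moreover, by the measure bound, $\#\{x\in V_{0^n}:n<|x|\le n+q\}\le 2^{-n}/2^{-(n+q)}=2^{q}$. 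Summing over $D_n$,
\[
|B_n|\ \le\ \sum_{x\in D_n,\,|x|\le n} 2^{\,n-|x|}\ +\ \#\{x\in V_{0^n}: n<|x|\le n+q\}\ \le\ 2^{n}\mu[V_{0^n}]+2^{q}\ \le\ 1+2^{q},
\]
a bound independent of $n$. But $|B_n|=F_{n+2}\to\infty$, a contradiction. Hence no MART has covering region $\mathcal{C}$, while $\mathcal{C}$ is an ART region, so the two notions are inequivalent.
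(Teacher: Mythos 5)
Your proof is correct and follows essentially the same route as the paper's: the paper uses the class $\{X \mid X_{2i}=0\}$ in place of your $11$-avoiding class, but the engine is identical --- pump the convolution automaton on the $\#$-padded tail to force every covering word in $V_{0^n}$ to have length at most $n$ plus a constant, and then observe that the exponentially many incomparable prefixes of the class at depth $n$ force $\mu[V_{0^n}]$ to exceed $2^{-n}$. Your argument is phrased as a cardinality upper bound on $|B_n|$ rather than a measure lower bound, and you are more careful than the paper about the prefix-free reduction and the two length regimes, but these are presentational rather than substantive differences.
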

\begin{proof}
Let us consider a class presented as an example earlier in the paper:
\[
\mathcal{C} = \{X\mid X_{2i}=0\text{ for all } i\in \mathbb{N}\}
\]
We have shown that there is an ART $\mathcal{U} = (U_i)_{i\in I}$ given as:
\[
I=(00)^*,\quad U_i=(\Sigma 0)^{\frac{\vert i \vert}{2}}
\]
such that $\mathcal{C} = F(\mathcal{U})$. Observe that for each $x\in U_i$, we have $\vert x \vert = \vert i \vert$. Moreover $\vert U_i \vert = 2^{\frac{\vert i \vert}{2}}$. We need to show that $\mathcal{C} \neq F(\mathcal{V})$ for any MART $\mathcal{V}$. Assume contrary: there is a such MART $\mathcal{V}$. Let us apply Theorem 1, to transform $\mathcal{V}$ to more favorable MART $\mathcal{W} = (W_e)_{e\in E}$ satisfying:
\begin{itemize}
	\item $E = 0^*$.
	\item $(W_e)_{e\in E}$ is decreasing, i.e. $W_i\supseteq W_j$ for $i\prec j$.
	\item The length of words contained in $W_e$ is larger than $\vert e \vert$, i.e. $x\in W_e\Rightarrow \vert x \vert > \vert e \vert$.
\end{itemize}
Let us consider $W_{0^{2k}}$ for some $k$. For the sake of notational convenience, we are going to write $kk$ in place of $0^{2k}$. Any $X\in \mathcal{C}$ should have some prefix $x_{2k}\prec X$ in $W_{kk}$, satisfying $\vert x_{2k}\vert >  2k$. This means that any $x\in U_{kk}$ has some extension $y$ in $W_{kk}$. Let $c$ a pumping constant corresponding to the automatic relation for $\mathcal{W}$. For any string $y\in W_{kk}$ such that $\vert y\vert > 2k + c$, we can apply the pumming lemma to pump down $y$ to $y'$ such that $\vert y' \vert \le 2k + c$. Hence, for any $x\in U_{kk}$ we can assume that its extension $y$ has a length of at most $2k+c$. Finally, we estimate the measure of $[W_{kk}]$:
\[
\mu[W_{kk}]\ge \sum_{x\in U_{kk}}2^{-\vert y \vert} \ge 2^k(2^{-2k-c})=2^{-k-c}> 2^{-2k}
\]
provided that $k> c$. This fact contradicts the requirement of $\mathcal{W}$ being a MART. 
\end{proof}
\noindent
So far, we have seen a difference between ART and MART on a class level, i.e. there is a class $\mathcal{C}\subseteq \{0,1\}^{\mathbb{N}}$ which is covered by some ART and none of MART. It turns out the difference goes further into a sequence level. In particular, we are going to show an existence of some sequence $X$ which is random with respect to MART, yet not random with respect to ART.
\begin{theorem}
There is a sequence $X$ which is Martin-L\"{o}f automatic random~(MAR), yet it is not automatic random~(AR).
\end{theorem}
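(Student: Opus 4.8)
The plan is to exhibit the required sequence inside the class $\mathcal{C}=\{X\mid X_{2i}=0\text{ for all }i\}$ already used in Theorem 10. Every member of $\mathcal{C}$ avoids the word $11$ (one of any two consecutive positions is even, hence $0$), so it is non-disjunctive and therefore not AR by Theorem 9; it thus suffices to produce a single $X\in\mathcal{C}$ that is MAR. Since automatic families are determined by finite data there are only countably many MART's, so I would instead prove the stronger measure-theoretic statement that \emph{almost every} point of $\mathcal{C}$ is MAR. Equip $\mathcal{C}$ with the probability measure $\nu$ obtained by pushing the uniform measure forward along $A\mapsto A\oplus 0^{\omega}$, so $\nu(\mathcal{C})=1$. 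The heart of the matter is the claim that $\nu(F(\mathcal{V}))=0$ for every MART $\mathcal{V}$. Granting this, countable subadditivity gives $\nu\bigl(\bigcup_{\mathcal{V}}F(\mathcal{V})\bigr)=0$, so $\nu$-almost every $X\in\mathcal{C}$ escapes all MART's and is MAR; any such $X$ is the desired sequence.

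To prove $\nu(F(\mathcal{V}))=0$ I would first invoke Theorem 1 (which, as remarked there, applies to MART's) to replace $\mathcal{V}$ by an equivalent normal-form MART $(W_{0^m})_{m\ge 0}$ that is decreasing and satisfies $x\in W_{0^m}\Rightarrow\vert x\vert>m$; then $\nu(F(\mathcal{V}))=\lim_m\nu[W_{0^m}]$ and it is enough to show $\nu[W_{0^m}]\to 0$. Let $N$, with $c$ states, recognize the relation $\{(x,0^m)\mid x\in W_{0^m}\}$. Reading the convolution of $(x,0^m)$ splits into a first phase of $m$ columns $(x_i,0)$ driving $N$ into a state $q=q(x_1\cdots x_m)$, followed by a tail phase of columns $(x_i,\#)$; hence $x\in W_{0^m}$ iff its length-$m$ prefix lands in some $q$ and the remaining suffix lies in the regular tail language $T_q=\{t\mid q\cdot(t,\#^{\vert t\vert})\in F\}$. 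Decomposing by the landing state,
\[
\mu[W_{0^m}]=\sum_q\hat\rho_q\,\mu[T_q],\qquad \nu[W_{0^m}]=\sum_q\rho_q\,\nu[T_q],
\]
where $\hat\rho_q$ and $\rho_q$ are the probabilities that a $\mu$-random, respectively $\nu$-random (i.e.\ consistent), length-$m$ prefix lands in $q$.

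Two estimates then finish the argument. First, a shortest accepting run of the $c$-state tail automaton repeats no state, so $T_q\neq\emptyset$ forces $\mu[T_q]\ge 2^{-c}$; together with the MART bound $\mu[W_{0^m}]\le 2^{-m}$ this gives $\sum_{q:\,T_q\neq\emptyset}\hat\rho_q\le 2^{c-m}$. Second, since the consistent prefixes landing in a given state form a $2^{-\lceil m/2\rceil}$ fraction of all such prefixes, one has $\rho_q\le 2^{\,m/2+O(1)}\hat\rho_q$. Using $\nu[T_q]\le 1$ and $\nu[T_q]=0$ when $T_q=\emptyset$, these combine to
\[
\nu[W_{0^m}]\le\sum_{q:\,T_q\neq\emptyset}\rho_q\le 2^{\,m/2+O(1)}\sum_{q:\,T_q\neq\emptyset}\hat\rho_q\le 2^{\,c+O(1)-m/2},
\]
which tends to $0$ as $m\to\infty$, as required.

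The step I expect to be the main obstacle is exactly this passage from the $\mu$-bound to the $\nu$-bound, i.e.\ relating the covering of consistent sequences to the automaton's behaviour after its index $0^m$ is exhausted. What makes it work is the \emph{square-root loss}: covering a $\nu$-fraction of $\mathcal{C}$ through a landing state costs a $\mu$-mass that is essentially the square of the naive budget, so the sharp MART bound $2^{-m}$ is overwhelmed by the factor $2^{m/2}$ relating the two landing distributions. Pinning down the state decomposition at column $m$ and the tail-language estimate precisely — in particular checking that the split interacts correctly with the normal-form length condition and with consistency — is where the care lies; the remaining bookkeeping (countability of MART's, $\nu(\mathcal{C})=1$, and continuity of $\nu$ from above for the decreasing family) is routine.
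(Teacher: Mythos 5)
Your proposal is correct, but it reaches the conclusion by a genuinely different route. The paper argues pointwise: it fixes a disjunctive $A$, sets $X=A\oplus 0^{\omega}$, de-interleaves a putative covering MART into a family $(V_i)$ with $A\in\bigcap_i[V_i]$, passes to the associated B\"uchi automaton and uses the leaf-component/disjunctivity machinery (Theorems 5 and 8) to extract a prefix $w\prec A$ beyond which \emph{every} consistent word of length $|i|$ extending $w$ has an extension in $V_i$, and then counts these extensions, capping their lengths at $2|i|+c$ by the pumping lemma to get $\mu[U_{ii}]\ge 2^{-|i|-|w|-c}>2^{-2|i|}$. You instead prove the averaged statement that every MART has $\nu$-null covering region and conclude by countability of MARTs; your landing-state decomposition at column $m$, combined with $\mu[T_q]\ge 2^{-c}$ for nonempty tails and $\rho_q\le 2^{\lfloor m/2\rfloor}\hat\rho_q$, replaces the paper's prefix-by-prefix counting and avoids the B\"uchi detour, the disjunctivity apparatus, and the pumping lemma altogether. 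Both proofs exploit the same square-root loss --- covering the thin class $\mathcal{C}$ costs $\mu$-mass of order $2^{-m/2}$ against a budget of $2^{-m}$ --- but yours buys a shorter, more self-contained argument and the stronger quantitative conclusion $\nu[W_{0^m}]\le 2^{c-m/2}$, at the price of not exhibiting the witness explicitly (the paper shows every $A\oplus 0^{\omega}$ with $A$ disjunctive works; you only get almost every $A$). Two minor points to tidy: since the normal form forces $|x|>m$, the tail language $T_q$ consists of nonempty words, so the shortest witness has length at most $c$ (not $c-1$), which still yields $\mu[T_q]\ge 2^{-c}$; and the suffix measure appearing in $\nu[W_{0^m}]=\sum_q\rho_q\,\nu[T_q]$ is really a shifted measure depending on the parity of $m$, which is harmless because you only use that it is at most $1$ and vanishes on the empty class.
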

\begin{proof}
Let $A$ be some disjunctive sequence and $X=A\oplus 0^{\omega}=A_10A_20\ldots$. Clearly, $X$ is not AR, for there is ART covering $X$ as shown in Examples subsection of the current paper. We are left to show that $X$ is MAR, i.e. no MART covers $X$. Assume contrary, suppose there is some MART $\mathcal{U}=(U_i)_{i\in I}$ covering $X$, which can be assumed to satisfy properties given in Theorem 1. Let us restrict our attention to even indexed language families, $U_{0^{2k}}$. Again for the sake of notational convenience, we replace an index of $0^{2i}$ by $ii$. Let us consider an induced family of languages, $\mathcal{V}=(V_i)_{i\in I}$ with $I=0^*$:
\[
V_i = \{a_1a_2\ldots a_n\mid (a_10a_20\ldots a_n\in U_{ii})\text{ or }(a_10a_20\ldots a_n0\in U_{ii})\}
\]
It is possible to construct a finite automaton recognizing $\mathcal{V}$ from that recognizing $\mathcal{U}$ by 'skipping' $0$ transitions. This argument shows that $\mathcal{V}$ is actually an automatic family. Furthermore, we have that $A\in \bigcap[V_i]$. This points out that $\mathcal{V}$ is not ART, for $A$ is a disjunctive sequence. In order to understand $\mathcal{V}$ better, let us switch to the realm of B\"{u}chi automata. As we know, there is some B\"{u}chi automaton $M$ recognizing $\bigcap[U_{ii}]$. A modification of $M$ consisting of 'skipping' over $0$ transitions, gives as a B\"{u}chi automaton $N$ recognizing $\bigcap[V_i]$. As for $N$, we know following facts: it is of positive measure and $A\in L(N)$. The latter fact can be interpreted as a presence of an accepting state in a connected component where the run of $A$ ends up in. Thus there is a prefix $w\prec A$ such that for any disjunctive sequence $X\in \mathcal{D}$, we have that $wX\in L(N)$, equivalently $wX\in [V_i]$ for all $i$. Now we can use this observation to derive a lower bound for the measure of $U_{ii}$, thus achieving a contradiction. Given any string $v$ of length $\vert i \vert$ such that $w\preceq v$, there is an extension of $v$ in $V_i$. Translating this statement for $U_{ii}$, we have: given $a_1a_2\ldots a_{\vert i \vert} \succeq w$, there is some string $u\in U_{ii}$ extending $a_10a_20\ldots a_n$. Thanks to pumping lemma, a length of $u$ can be assumed to be no greater than $2\vert i \vert + c$, where $c$ is a pumping constant correspong to automatic relation $\mathcal{U}$. Just as in previous proof, we now estimate $\mu[U_{ii}]$:
\[
\mu[U_{ii}]\ge \sum_{v}2^{-\vert u \vert} \ge 2^{\vert i \vert - \vert w \vert}2^{-2\vert i \vert - c} = 2^{-\vert i \vert - \vert w \vert - c} > 2^{-2\vert i \vert}
\]
given large enough $\vert i \vert$. This clearly violates conditions for being MART, which indicates falsity of the initial assumption. Hence, $\mathcal{U}$ could not be MART.
\end{proof}
\noindent
So far, we have seen that it is impossible, in general, to impose a condition of $\mu[U_i]\le 2^{-\vert i \vert}$ on ART without altering its randomness properties. A question now is what can we say about individual measures at all? Can we impose some weaker conditions? It turns out individual measures can be assumed to decrease exponentially.
\begin{theorem}
Suppose that $\mathcal{C}=F(\mathcal{U})$ for some ART $\mathcal{U}$. Then there is an  ART $\mathcal{V} = (V_j)_{j\in J}$ subsuming $\mathcal{U}$, such that $\mu[V_j]\le \gamma^{\vert j \vert}$ for some $\gamma<1$.
\end{theorem}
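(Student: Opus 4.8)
The plan is to start from the characterization of $F(\mathcal{U})$ in terms of a deterministic B\"{u}chi automaton. By Corollary~1, $\mathcal{C} = F(\mathcal{U})$ is recognized by some deterministic B\"{u}chi automaton $M = (S,f,s_0,F)$ of measure zero, and by Theorem~6 this means $F \cap g = \emptyset$ for every leaf connected component $g \in (M)$. The key geometric fact I would exploit is that, because accepting states live only in \emph{non-leaf} components, any run that stays inside the accepting part of $M$ must keep ``climbing'' through the partial order on components, and a run confined to a non-leaf component has a uniformly bounded probability of being trapped there. The heart of the argument is a uniform bound: there is a single word length $d$ and a constant $\rho < 1$ such that, starting from \emph{any} state $q$ from which an accepting state is still reachable, the measure of length-$d$ extensions that avoid dropping the run into a strictly lower component (equivalently, that keep visiting accepting states possible) is at most $\rho$. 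This is the same style of pumping argument used in Lemma~1 and Lemma~2, where one builds a single universal word that forces the run toward a leaf component from any starting state.

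Next I would convert this per-block contraction into an exponential decay of the measure. Recall from Theorem~4's construction that one can realize $F(\mathcal{U})$ as $\bigcap_{i \in I}[U_i]$ where, taking $I = 0^*$, the level-$i$ language $U_i = \{x \in R : |x| \ge |i|\}$ collects accepting prefixes of $M$ of length at least $|i|$. The point is that $[U_i]$ consists exactly of sequences whose run visits an accepting state at or beyond time $|i|$. For a sequence in $[U_{0^n}]$, the run must still be capable of reaching $F$ at stage $n$, hence has survived $\lfloor n/d \rfloor$ independent blocks each contributing a factor at most $\rho$ by the uniform bound. Formally I would set $V_j = U_{0^{|j|}}$ (reindexing so that $J = 0^*$) and estimate $\mu[V_j] = \mu[U_{0^{|j|}}] \le \rho^{\lfloor |j|/d\rfloor} \le \gamma^{|j|}$ for $\gamma = \rho^{1/(2d)} < 1$, absorbing the floor into the constant. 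Because $(U_i)_{i\in I}$ is decreasing and $\bigcap_{j\in J}[V_j] = \bigcap_{i\in I}[U_i] = \mathcal{C}$, the family $\mathcal{V}$ subsumes $\mathcal{U}$ (indeed is equivalent to it) and remains a valid ART since $\gamma^{|j|} \to 0$.

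The main obstacle, and the step needing the most care, is establishing the uniform per-block contraction factor $\rho < 1$. The subtlety is that $\rho$ must be uniform across all states $q$ lying on some accepting run, and one must correctly identify which event is being contracted: it is not literally ``visit an accepting state in the next $d$ steps'' but rather ``the run remains in a component from which $F$ is still reachable.'' I would argue that from any such state $q$, the universal forcing word of Lemma~2 drives the run into a leaf component, and since leaf components contain no accepting states, the measure of extensions keeping an accepting state reachable strictly decreases by a fixed multiplicative amount per block of length $d = |S|$. A clean way to package this is via the finitely many components ordered by the partial order on $[M]$: each block has a fixed positive probability of descending at least one level, there are at most $|S|$ levels, and once the run reaches a leaf it can never again see $F$; bounding the probability of surviving $k$ blocks without permanently descending then yields geometric decay. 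I would be careful to verify the uniform bound holds for the initial segment too, but this is a routine ``finitely many starting states'' observation once the per-block statement is in place.
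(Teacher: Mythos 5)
Your argument is sound and rests on the same combinatorial core as the paper's proof -- namely that no accepting state of the measure-zero B\"{u}chi automaton $M$ lies in a leaf connected component, and that a single word drives every state of $M$ into a (closed, $F$-free) leaf component -- but you extract the exponential decay along a genuinely different route. The paper does not touch the canonical family of Theorem 4 at all: it observes that every $X\in\mathcal{C}$ must avoid the forcing word $w$ as a subword, hence $\mathcal{C}\subseteq\mathcal{C}_w$, and simply reuses the ART $W_i=\{x : |x|=|i|,\ x\text{ avoids }w\}$ already built in the proof of the characterization theorem, for which $\mu[W_{0^{dk}}]\le(1-2^{-d})^k$ was obtained by cutting a word into $k$ blocks of length $d=|w|$; restricting the index set to $(0^d)^+$ then gives $\gamma=(1-2^{-d})^{1/d}$. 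The price is that the resulting $\mathcal{V}$ has covering region $\mathcal{C}_w$, which may strictly contain $\mathcal{C}$, so the paper genuinely only gets subsumption. You instead keep $U_i=\{x\in R : |x|\ge|i|\}$ from Theorem 4 and bound $\mu[U_{0^n}]$ directly by a survival argument: any $X\in[U_{0^n}]$ must keep $F$ reachable for $n$ steps, and each length-$d$ block fails to trap the run in a leaf component with probability at most $1-2^{-d}$. This buys the stronger conclusion that $\mathcal{V}$ is \emph{equivalent} to $\mathcal{U}$, not merely subsuming. Two points of care: the ``independent blocks'' step should be phrased as conditioning on the state at the start of each block (a Markov absorption argument), not literal independence -- you gesture at this correctly -- and your worry about uniformity over states resolves more simply than your component-descent sketch suggests, since one universal word $w$ traps the run in a leaf component from \emph{every} state, so a single event per block suffices.

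One small repair is needed in the constants: with $\gamma=\rho^{1/(2d)}$ the claimed inequality $\rho^{\lfloor|j|/d\rfloor}\le\gamma^{|j|}$ fails for $1\le|j|<d$, where the left-hand side is $1$ while the right-hand side is strictly below $1$ (and $\mu[U_{0^{|j|}}]$ can genuinely equal $1$ there). Shift the index by a constant, e.g. $V_j=U_{0^{|j|+d}}$, exactly as the paper does in Theorem 1 with $V_j=W_{j0^{c+1}}$; then $\lfloor(|j|+d)/d\rfloor\ge|j|/d$ and $\gamma=\rho^{1/d}$ works for all $j$, including $j=\varepsilon$.
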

\begin{proof}
Invoking a machine characterization of ART, there is a deterministic B\"{u}chi automaton of measure zero, $M=(S,f,s_0,F)$, recognizing $\mathcal{C}$. By Theorem 6, none of accepting states of $M$ is in a leaf connected component. Applying the argument used in the proof of Lemma 1, there is a word $w$ which brings any state of $M$ into a leaf connected component. This means that for any $X\in \mathcal{C}$, $X$ does not contain $w$ as a subword. Let $\mathcal{C}_w$ be a collection of all sequences not having $w$ as a subword. By the previous argument $\mathcal{C}\subseteq \mathcal{C}_w$. Recall that in the proof of Theorem 9, we have constructed an ART, $\mathcal{W}=(W_i)_{i\in I}$, corresponding to $\mathcal{C}_w$:
\[
W_i = \{x\mid \vert x \vert = \vert i \vert, \, x \text{ does not contain }w\text{ as a subword}\}
\]
We have shown that $\mathcal{W}$ is indeed an ART such that $\mathcal{C}_w = F(\mathcal{W})$. Furthermore, we have shown that:
\[
\mu[W_{0^{dk}}]\le \left(1-\frac{1}{2^d}\right)^k
\]
where $d=\vert w \vert$. Observe that $([W_j])_{j\in J}$ is a decreasing sequence, in a sense that $[W_i]\supseteq [W_j]$ for $i\prec j$. Hence, a condition $X\in F(\mathcal{W})$ can be stated as $\exists^{\infty}i\in I$ with $x_i\in W_i$ and $x_i\prec X$. Thus, choosing any infinite collection of members from $\mathcal{W}$ results in the exact same covering region. We only need to ensure that this choice somehow selects a regular language of indices. Collection $J = (0^d)^+ = 0^d(0^d)^*$ forms a regular language. So we set a new ART $\mathcal{V} = (V_j)_{j\in J}$ so that $V_j = W_j$. Given $j\in J$ such that $\vert j \vert = dk$, we have:
\[
\mu[V_j] \le (1-\frac{1}{2^d})^{k} = \gamma^{\vert j \vert}, \text{ where } \gamma = \left(1 - \frac{1}{2^d}\right)^{\frac{1}{d}}
\] 
This completes the construction. 
\end{proof}

\section{Discussions}
We have defined and investigated properties of randomness tests in the context of automata theory. These investigations led to quite unexpected connections between randomness tests and $\omega$-automata such as by B\"{u}chi and Muller. Furthermore, a purely combinatorial characterization of automatic randomness in the form of disjunctive property for sequences has been found. Let us compare our results with automatic randomness notions arising from other paradigms. As for the unpredictability paradigm considered in \cite{Stimm,OConnor}, automatic random sequences correspond to normal sequences as in \cite{Borel}. Similarly, for the incompressibility paradigm considered in \cite{Becher,Shen} random sequences correspond to normal ones. Clearly, a normality is much stronger property than being disjunctive. Hence, the current automatic randomness tests result in much larrger class of random sequences. It is an open question if one could modify the definition of automatic randomness tests, so that resulting randomness class coincides with that of normal sequences. Moreover,  we hope that many more notions from the theory of algorithmic randomness could find their counterparts in the context of automata theory. 

\section*{Acknowledgement}
We would like to thank Frank Stephan for numerous discussions held in person as well as via e-mail. 

\medskip
\bibliographystyle{unsrt}
\bibliography{Reference.bib}

\end{document}